\newtheorem{theorem}{Theorem}
\newtheorem{definition}[theorem]{Definition}
\newenvironment{proof}[1][Proof]{\noindent\textbf{#1.} }{\ \rule{0.5em}{0.5em}}
\newcommand{\bra}[1]{\langle#1|} \newcommand{\ket}[1]{|#1\rangle}
\newcommand{\ketbra}[2]{|#1\rangle\!\langle#2|}
\newcommand{\tr}{\text{tr}}
\newcommand{\rank}{\text{rank}}
\tikzset{
    >=stealth',
    punkt/.style={
           rectangle,
           rounded corners,
           draw=black, very thick,
           text width=6.5em,
           minimum height=2em,
           text centered},
    pil/.style={
           ->,
           thick,
           shorten <=2pt,
           shorten >=2pt,},
  on each segment/.style={
    decorate,
    decoration={
      show path construction,
      moveto code={},
      lineto code={
        \path [#1]
        (\tikzinputsegmentfirst) -- (\tikzinputsegmentlast);
      },
      curveto code={
        \path [#1] (\tikzinputsegmentfirst)
        .. controls
        (\tikzinputsegmentsupporta) and (\tikzinputsegmentsupportb)
        ..
        (\tikzinputsegmentlast);
      },
      closepath code={
        \path [#1]
        (\tikzinputsegmentfirst) -- (\tikzinputsegmentlast);
      },
    },
  },
  mid arrow/.style={postaction={decorate,decoration={
        markings,
        mark=at position .5 with {\arrow[#1]{stealth'}}
      }}}
}
\newcommand\bellbra[2]{
    \draw[fill] (#1,#2) circle [radius=0.1];
    \draw [thick] (#1-2,#2) -- (#1+2,#2);
    \draw[thick,->] (#1,#2) -- (#1-1,#2);
    \draw[thick,->] (#1,#2) -- (#1+1,#2);
}
\newcommand\bellket[2]{
    \draw[fill] (#1,#2) circle [radius=0.1];
    \draw [thick] (#1-2,#2) -- (#1+2,#2);
    \draw[thick,->] (#1-2,#2) -- (#1-1,#2);
    \draw[thick,->] (#1+2,#2) -- (#1+1,#2);
}
\newcommand\arrowedcurve[6]{
	\draw[thick,postaction={on each segment={mid arrow}}] (#1,#2) to [out=#5,in=#6] (#3,#4);
}
\newcommand\bellbrasegment[5]{
\arrowedsegment{#1/2+#3/2}{#2/2+#4/2}{#1}{#2}
\arrowedsegment{#1/2+#3/2}{#2/2+#4/2}{#3}{#4}
\draw[fill] (#1/2+#3/2,#2/2+#4/2) circle [radius=#5];
}
\newcommand\bluearrowedsegment[4]{
	\draw[blue,thick,postaction={on each segment={mid arrow}}] (#1,#2) -- (#3,#4);
}
\newcommand\blueketsegment[5]{
\bluearrowedsegment{#1}{#2}{#1/2+#3/2}{#2/2+#4/2}
\bluearrowedsegment{#3}{#4}{#1/2+#3/2}{#2/2+#4/2}
\draw[fill,blue] (#1/2+#3/2,#2/2+#4/2) circle [radius=#5];
}
\newcommand\bluebrasegment[5]{
\bluearrowedsegment{#1/2+#3/2}{#2/2+#4/2}{#1}{#2}
\bluearrowedsegment{#1/2+#3/2}{#2/2+#4/2}{#3}{#4}
\draw[fill,blue] (#1/2+#3/2,#2/2+#4/2) circle [radius=#5];
}
\newcommand\arrowedsegment[4]{
	\draw[thick,postaction={on each segment={mid arrow}}] (#1,#2) -- (#3,#4);
}
\newcommand\polararrowedsegment[4]{
	\draw[thick,postaction={on each segment={mid arrow}}] (#1,#2) -- +(#3:#4);
}
\title{Tensor networks for dynamic spacetimes}
\author[a]{A. May}
\affiliation[a]{Department of Physics and Astronomy, University of British Columbia, 6224 Agricultural Road, Vancouver, B.C., V6T 1W9\\ Canada}
\emailAdd{may@phas.ubc.ca}
\abstract{Existing tensor network models of holography are limited to representing the geometry of constant time slices of static spacetimes. We study the possibility of describing the geometry of a dynamic spacetime using tensor networks. We find it is necessary to give a new definition of length in the network, and propose a definition based on the mutual information. We show that by associating a set of networks with a single quantum state and making use of the mutual information based definition of length, a network analogue of the maximin formula can be used to calculate the entropy of boundary regions.}
\begin{document} 
\maketitle
\flushbottom

\section{Introduction}

In the context of the AdS/CFT correspondence, the Ryu-Takayanagi (RT) formula \cite{ryu2006holographic} relates the entropy of boundary regions to minimal surfaces of the asymptotically AdS bulk region. Building on an understanding of the MERA network \cite{vidal2007entanglement, vidal2008class} and the presence of area laws for entanglement entropy in tensor network states \cite{eisert2010colloquium}, tensor networks which realize the RT formula have been found \cite{pastawski2015holographic, hayden2016holographic, yang2016bidirectional}. These tensor network models also realize the subregion duality property \cite{czech2012gravity, bousso2012light, hubeny2012causal, bousso2013null, almheiri2015bulk}, a fact that has highlighted the role of quantum error correction in AdS/CFT.

In the analogy between tensor network models and AdS/CFT the graph geometry of the network corresponds to the geometry of a spacelike slice of an AdS spacetime, while the state on the free legs of the network corresponds to the CFT state. Although the analogy realizes notable properties of the AdS/CFT correspondence, we point out here that the bulk geometries represented by these models are restricted to constant time slices of static spacetimes. 

In this paper we study if it is possible to use tensor networks to describe geometries and boundary states in the dynamic case. To evaluate if our networks describe a dynamic geometry we study the  HRT formula \cite{hubeny2007covariant}, which generalizes RT to dynamic CFT states and AdS geometries. In the tensor network setting we find it useful to consider the restatement of HRT as the maximin formula \cite{wall2014maximin}. The maximin formula involves optimizing over all the spacelike slices anchored on a boundary region of interest. In particular, for a dynamic spacetime minimal surfaces for different boundary regions may not lie in a single spacelike slice.

In the tensor network picture there are many networks which contract to give the same boundary state. In our construction these networks play the role of the many possible spacelike slices of the AdS interior for a fixed boundary region. In particular, we show that the set of networks all contracting to a fixed boundary state can be searched over to calculate boundary entropies. In the case of HaPPY networks we find that this procedure reduces to finding minimal lengths in a single network. Additionally, we demonstrate the existence of states where several networks are needed to calculate the entropies of all boundary regions, analogous to the situation for dynamic CFT states.

A key step in constructing network models that describe dynamic geometries is a redefinition of length. Indeed, in earlier models length is defined as the logarithm of the total dimension of the legs crossed by the cut of interest, that is $L(\gamma) = \log \dim \gamma$. It is not difficult to see that this definition of length removes any possibility of describing a dynamic geometry. Indeed, in the maximin picture there should be networks where the minimal length of a cut drops below the entropy of the boundary region it encloses. For instance taking the maximin surface and moving it in a timelike direction decreases its length. However, the quantity $\log \dim \gamma$ actually provides an upper bound on the entropy of the enclosed region, so if $L(\gamma) = \log \dim \gamma$ it can never happen that the length is less than the entropy. 

We emphasize that our results do not contain a method for evolving a tensor network in time. Rather, we establish a network analogue of the maximin formula. Our procedure searches over a set of networks, but this set does not have an ordering and consequently we cannot directly interpret the set as a sequence of networks describing a time evolution. However, one motivation for this work is that it may give insight into how to construct a time evolving network. We discuss this possibility further in section \ref{sec:discussion}.

The structure of this paper is as follows. In section \ref{sec:background} we give background on tensor networks. The reader familiar with networks may skim this section, but should note the explanation of the arrow notation and the statement of theorem \ref{thm:converse}, which is straightforward but not stated elsewhere. Section \ref{sec:stateoncut} discusses tensor networks as maps and points out the equivalence of isometries and minimal cuts in HaPPY networks. Section \ref{sec:maximinlessons} gives our demonstration that existing constructions are limited to representing constant time slices of static spacetimes. Section \ref{sec:lengthdefined} and \ref{sec:staticexample} give a new definition of length. Finally, in section \ref{sec:dynamicexample} we give a preliminary example of a set of networks and a quantum mechanical state with properties analogous to dynamic holographic states. 

\section{Tensor network background} \label{sec:background}

\subsection{Tensor network basics} \label{sec:basics}

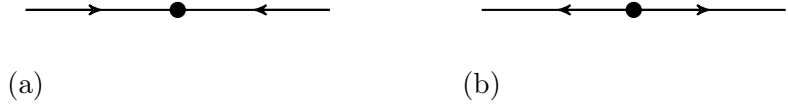
\begin{figure} 
\begin{tikzpicture}
\bellket{0}{0}
\node at (-2,-1) {(a)};

\bellbra{6}{0}
\node at (4,-1) {(b)};
\node at (0,3) {};
\end{tikzpicture}
\caption{(a) A maximally entangled state in the Hilbert space, represented in the graphical notation. (b) A maximally entangled state in the dual Hilbert space, represented in the graphical notation.}
\label{fig:bellpairs}
\end{figure}

The tensor network formalism describes graphically the pattern of contraction of a set of simple objects to form a more complex quantum state. The basic objects in the graphical formalism are vertices with some number of lines attached. Each vertex corresponds to a quantum state, and the lines each correspond to a ket or bra index. For instance
\begin{align}
\ket{\Psi^+} = \sum_{m=1}^D \ket{m}\ket{m}
\end{align}
is represented by figure \ref{fig:bellpairs}a. We attach a direction (inward or outward) to each line in the diagram, with inward arrows indicating ket indices and outward arrows indicating bra indices. Thus $\bra{\Psi^+}$ is represented as in figure \ref{fig:bellpairs}b. 

It will be convenient to write down quantum states without explicitly including their basis vectors. For example, the maximally entangled state $\ket{\Psi^+}$ is written $\delta_{ab}$, leaving the choice of basis implicit. In this notation ket indices are lowered and bra indices are raised, so $\bra{\Psi^+}$ becomes $\delta^{ab}$. More generally a quantum state
\begin{align} \label{eq:examplestate}
\ket{\phi} = \sum_{a,b} T_{ab} \ket{a}\ket{b}
\end{align}
is specified as $T_{ab}$. The upper and lower indices carry transformation rules with them. Since
\begin{align}
\ket{\phi} &= \sum_{a,b,c} T_a (U^\dagger)_{b}^a U^b_c \ket{c} = \sum_{b} T_b^\prime  \ket{b},
\end{align}
we see that lower indices transform according to $T_a\rightarrow T_a (U^\dagger)_{b}^a$ under a change of basis described by $\ket{b}=U^{b}_c\ket{c}$. Similarly under the same change of basis upper indices transform according to $T^a\rightarrow T^a U_a^b$.

\begin{figure}
\begin{tikzpicture}[scale=0.3]
\draw[thick,postaction={on each segment={mid arrow}}] (8,3) -- (5,0);
\draw[thick,postaction={on each segment={mid arrow}}] (2,3) -- (5,0);
\draw[thick,postaction={on each segment={mid arrow}}] (8,-3) -- (5,0);
\draw[thick,postaction={on each segment={mid arrow}}] (2,-3) -- (5,0);
\draw[fill=white] (5,0) circle [radius=1]; 
\node at (5,0) {S};

\draw[thick,postaction={on each segment={mid arrow}}] (-8,3) -- (-5,0);
\draw[thick,postaction={on each segment={mid arrow}}] (-2,3) -- (-5,0);
\draw[thick,postaction={on each segment={mid arrow}}] (-8,-3) -- (-5,0);
\draw[thick,postaction={on each segment={mid arrow}}] (-2,-3) -- (-5,0);
\draw[fill=white] (-5,0) circle [radius=1];
\node at (-5,0) {T};

\draw[->,thick] (10,0)--(12,0);

\draw[fill] (21,3) circle [radius=0.25];
\draw[fill] (21,-3) circle [radius=0.25];

\draw[thick,postaction={on each segment={mid arrow}}] (21,3) to [out=0,in=135] (26,0);
\arrowedcurve{21}{3}{26}{0}{0}{135}
\arrowedcurve{21}{3}{16}{0}{180}{45}
\arrowedcurve{21}{-3}{16}{0}{180}{-45}
\arrowedcurve{21}{-3}{26}{0}{0}{-135}

\draw[thick,postaction={on each segment={mid arrow}}] (29,3) -- (26,0);
\draw[thick,postaction={on each segment={mid arrow}}] (29,-3) -- (26,0);
\draw[fill=white] (5+21,0) circle [radius=1]; 
\node at (5+21,0) {S};

\draw[thick,postaction={on each segment={mid arrow}}] (13,3) -- (16,0);
\draw[thick,postaction={on each segment={mid arrow}}] (13,-3) -- (16,0);
\draw[fill=white] (-5+21,0) circle [radius=1];
\node at (-5+21,0) {T};

\end{tikzpicture}
\caption{A basic example of a contraction of two quantum states into a tensor network. Algebraically, the objects at left are written $T_{abcd}$ and $S_{efgh}$. The object at right is $T_{abcd}\delta^{ce}\delta^{df}S_{efgh}$.}
\label{fig:examplenetwork}
\end{figure}
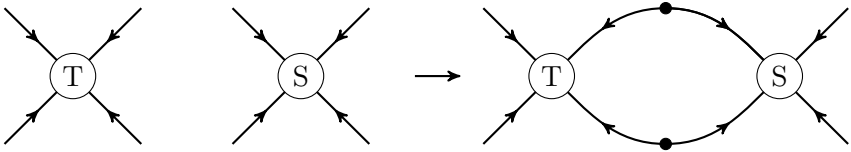

The basic operation of the tensor network formalism is the composition of two quantum states. Composition of two ket states, say $T_{abcd}$ and $S_{efgh}$ is performed by introducing maximally entangled bra states, 
\begin{align}
T_{abcd} \circ S_{efgh} \rightarrow T_{abcd}\delta^{ce}\delta^{df} S_{efgh}.
\end{align}
This is illustrated in figure \ref{fig:examplenetwork}. The contraction performed above was not the unique choice, since different pairs of indices could have been contracted. In general, to describe a pattern of contraction of two or more quantum states a graph is specified. States are associated with vertices, with each line attached to a vertex representing a particular index. The contraction is performed by placing maximally entangled pairs on the edges and connecting in going and out going lines. 

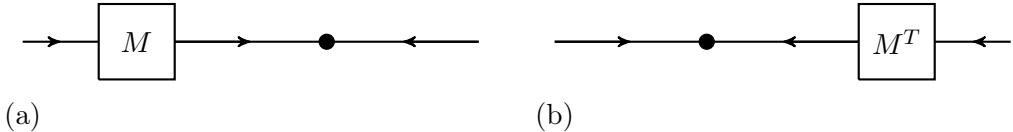
\begin{figure} 
\begin{tikzpicture}

\draw[thick,fill=white] (-0.5,-0.5)--(+0.5,-0.5)--(+0.5,+0.5)--(-0.5,0.5)--(-0.5,-0.5);
\bellket{2.5}{0}
\node at (0,0) {$M$};
\draw[thick,postaction={on each segment={mid arrow}}] (-1.5,0) -- (-0.5,0);
\node at (-1.5,-1) {(a)};

\bellket{7.5}{0}
\draw[thick,fill=white] (9.5,-0.5)--(10.5,-0.5)--(10.5,+0.5)--(9.5,+0.5)--(9.5,-0.5);
\node at (10,0) {$M^T$};
\draw[thick,postaction={on each segment={mid arrow}}] (11.5,0) -- (10.5,0);
\node at (5.5,-1) {(b)};

\end{tikzpicture}
\caption{(a) An operator $M\otimes \mathcal{I}$ applied to a quantum state. (b) In the case where the vertex represents a maximally entangled state, the operator $M$ can be moved to the other subspace by taking the transpose.}
\label{fig:operatordiagram}
\centering
\end{figure}

Operators acting on quantum states we represent as vertices having both inward directed and outward directed lines attached, and are written as tensors with upper and lower indices, for example ${M_a}^b$. Diagrammatically, applying an operator to a state is given by connecting lines. The algebraic equivalent is performing the appropriate sum. Thus the operator ${M_{a}}^b$ applied to a state $T_{ab}$ is represented by figure \ref{fig:operatordiagram}a or by ${M_a}^c T_{cb}$. In the case of maximally entangled states, it is straightforward to show the identity
\begin{align}
M \otimes \mathcal{I} \ket{\Psi^+} = \mathcal{I} \otimes M^T \ket{\Psi^+}.
\end{align}
We will refer to this as the \emph{transpose rule} below.

Indices can be raised and lowered by contracting with maximally entangled pairs. In particular an operator ${M_{a}}^b$ can be mapped to a state by ${M_a}^b \rightarrow M_{ab}={M_{a}}^c \delta_{cb}$, and states to operators by $M_{ab} \rightarrow {M_a}^b = M_{ac}\delta^{cb}$. In the simplest case of a state with two indices this is also known as the Choi-Jamio{\l}kowski mapping \cite{jiang2013channel} between pure bipartite states and operators. More generally, we can raise and lower indices on objects with arbitrary numbers of indices by appropriate contractions with maximally entangled pairs. 

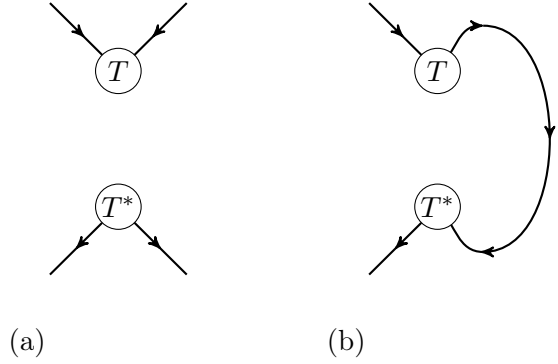
\begin{figure} 
\centering
\begin{tikzpicture}[scale=0.3]

\node at (0,0) {(a)};
\node at (14,0) {(b)};

\arrowedsegment{1}{15}{4}{12}
\arrowedsegment{7}{15}{4}{12}
\arrowedsegment{3.29}{5.29}{1}{3}
\arrowedsegment{4.71}{5.29}{7}{3}

\arrowedsegment{15}{15}{18}{12}
\arrowedsegment{17.29}{5.29}{15}{3}
\draw[thick,->] (18,12) to [out=45,in=180] (20,14);
\arrowedcurve{20}{14}{20}{4}{0}{0}
\draw[thick,->] (20,4) -- (19.9,4);
\draw[thick,->] (20,4) to [out=180,in=-45] (18,6);

\draw[fill=white] (4,6) circle [radius=1];
\draw[fill=white] (4,12) circle [radius=1];
\node at (4,6) {$T^*$};
\node at (4,12) {$T$};

\draw[fill=white] (18,6) circle [radius=1];
\draw[fill=white] (18,12) circle [radius=1];
\node at (18,6) {$T^*$};
\node at (18,12) {$T$};

\end{tikzpicture}
\caption{(a) Representation of the density matrix $\rho_{AB}$ shown in eq. \ref{eq:densitymatrixexample} in the graphical notation. (b) Representation of the reduced density matrix $\rho_A$ in the graphical notation.}
\label{fig:densitymatrix}
\centering
\end{figure}

It is also possible to represent density matrices in a tensor network diagram. The density matrix corresponding to the state in eq. \ref{eq:examplestate}, given by 
\begin{align} \label{eq:densitymatrixexample}
\rho_{AB}=\sum_{ab}T_{ab} (T^*)^{cd} \ketbra{a}{c}_A\otimes\ketbra{b}{d}_B,
\end{align}
is drawn as the network shown in figure \ref{fig:densitymatrix}a. Contracting corresponding inward or outward indices in the diagram performs the partial trace. We show the diagram for $\rho_A$ in figure \ref{fig:densitymatrix}b.

In general the contraction of two properly normalized quantum states results in an unnormalized output, meaning it is necessary to add a final normalization factor after all the contractions have been performed. For this reason we frequently drop any normalization factors on our initial states, for instance writing $\ket{\Psi^+} = \sum \ket{m}\ket{m}$, since this has no effect on the final state after contraction and adding proper normalization. 

We recall an important bound on the von Neumann entropy of a subsystem of a tensor network state. Suppose we have a quantum state which is written
\begin{align}
\ket{\psi} &= \sum T_{i_1...i_n j_1...j_n}\ket{i_1}_{\bar{A}}...\ket{i_n}_{\bar{A}}\ket{j_1}_A... \ket{j_n}_A \nonumber \\
&= \sum_{IJ}T_{IJ} \ket{I}_{\bar{A}} \ket{J}_A,
\end{align}
where the capital indices stand in for a set of lower case indices, and we are interested in the entropy of the $A$ subsystem. If this state is described by a tensor network we can consider a cut $\gamma$ passing through the network and separating off the region $A$. Such a cut is specified by a path in the dual graph, which passes through a sequence of maximally entangled pairs. For each cut, there is a corresponding decomposition of the $T_{IJ}$ given by
\begin{align}
\ket{\psi} &= \sum_{IJKL}A_{IJ}\delta^{JK}{B}_{KL} \ket{I}_{\bar{A}} \ket{L}_A.
\end{align}
Now define states by $\sum_J B_{KL} \ket{L}_A = \ket{\hat{K}}_A$ and $\sum_I A_{IK} \ket{I} = \ket{\hat{K}}_{\bar{A}}$. This gives
\begin{align}
\ket{\psi} &= \sum_{K=1}^{|K|} \ket{\hat{K}}_{\bar{A}} \ket{\hat{K}}_A.
\end{align}
From this we have that $\rank(\rho_A) \leq |K|$. Since the von Neumann entropy is bounded above by the log of the rank, we have
\begin{align} \label{eq:rankbound}
S(\rho_A) \leq \log{\dim \gamma},
\end{align}
where we define the dimension of the cut by $\dim \gamma \equiv |K|$. Equality occurs when $\ket{\hat{K}}_A$ and $\ket{\hat{K}}_{\bar{A}}$ are orthonormal bases. 

HaPPY networks are a special class of networks for which it has been shown \cite{pastawski2015holographic} that cuts which cross a minimal number of legs saturate the bound \ref{eq:rankbound}. The basic building block of a HaPPY network is a perfect tensor. We remind the reader of the definition of a perfect tensor below. 

\begin{definition}
A \textbf{perfect tensor} is a tensor $T_{a_1 a_2...a_{2n}}$ with an even number of indices and having the property that
\begin{align} \label{eq:perfectioncondition}
T_{a_1...a_n a_{n+1}...a_{2n}} (T^*)^{b_1...b_n a_{n+1}...a_{2n}} = \delta_{a_1}^{b_1}...\delta_{a_n}^{b_n},
\end{align}
where the $a_{n+1}...a_{2n}$ can be chosen to be any of the $2n$ legs of the tensor.
\end{definition}

We can also raise and lower legs on the left side of \ref{eq:perfectioncondition}, giving
\begin{align}\label{eq:unitarycondition}
{T_{a_1...a_n}}^{a_{n+1}...a_{2n}} {(T^*)^{b_1...b_n}}_{a_{n+1}...a_{2n}} = \delta_{a_1}^{b_1}...\delta_{a_n}^{b_n}.
\end{align}
This shows we can think of the perfection condition as the statement that the tensor defines a unitary transformation from any set of $n$ legs to the complement. It follows by contracting indices on both sides of \ref{eq:unitarycondition} that perfect tensors define isometries from any subset of legs of size $k<n$ to the complement. We illustrate the perfection condition in figure \ref{fig:perfectcondition}. 

A useful operation involving perfect tensors is operator pushing. Suppose we have a perfect tensor $T$ and an operator $\mathcal{O}$ which acts on three legs. Then we can rewrite the tensor $\mathcal{O}T$ as $T \mathcal{O}'$ by defining $\mathcal{O}'=T^\dagger \mathcal{O} T$. We illustrate this in figure \ref{fig:operatorpushing}. An operator acting on a single leg of a $2n$ leg perfect tensor can be pushed through to any $n$ legs, but in general the operator $\mathcal{O}'$ will not act as a tensor product across those legs.  

To construct a HaPPY network, perfect tensors are placed on the vertices of a graph with a non-positive curvature condition\footnote{By non-positive curvature it is meant that distance (measured in number of legs cut) between points in the dual graph has no maximum away from the boundary.}. Reference \cite{pastawski2015holographic} which introduced HaPPY networks does not keep track of the distinction between upper and lower indices in their construction, so to translate their construction to the language used here we must consider a maximally entangled state being placed along every edge of this non-positively curved graph. This done, we may perform the contraction, leaving a boundary state whose entanglement entropies saturate \ref{eq:rankbound}.

\begin{figure}
\begin{tikzpicture}[scale=0.7]
\draw[thick, postaction={on each segment={mid arrow}}] (0,0) to [out=0,in=180] (5,0);
\draw[thick, postaction={on each segment={mid arrow}}] (0,0) to [out=45,in=135] (5,0);
\draw[thick, postaction={on each segment={mid arrow}}] (0,0) to [out=-45,in=-135] (5,0);
\draw[thick, postaction={on each segment={mid arrow}}] (-1,0)  to [out=0,in=180] (-3,0);
\draw[thick, postaction={on each segment={mid arrow}}] (-0.71,0.71)  to [out=-45,in=135] (-2,2);
\draw[thick, postaction={on each segment={mid arrow}}] (-0.71,-0.71) to [out=45,in=-135] (-2,-2);
\draw[thick, postaction={on each segment={mid arrow}}] (8,0) to [out=0,in=180] (6,0);
\draw[thick, postaction={on each segment={mid arrow}}] (7,2) to [out=45,in=-135] (5+0.71,0.71);
\draw[thick, postaction={on each segment={mid arrow}}] (7,-2) to [out=-45,in=135] (5+0.71,-0.71);
\draw[fill=white] (0,0) circle [radius=1];
\node at (0,0) {$T^\dagger$};
\draw[fill=white] (5,0) circle [radius=1];
\node at (5,0) {$T$};

\node at (10,0) {$=$};
\draw[thick, postaction={on each segment={mid arrow}}] (17,0)--(12,0);
\draw[thick, postaction={on each segment={mid arrow}}] (17,2)--(12,2);
\draw[thick, postaction={on each segment={mid arrow}}] (17,-2)--(12,-2);
\end{tikzpicture}
\caption{Illustration of the defining condition for perfect tensors. The same equality must hold when any subset consisting of half the legs is contracted.}
\label{fig:perfectcondition}
\end{figure}
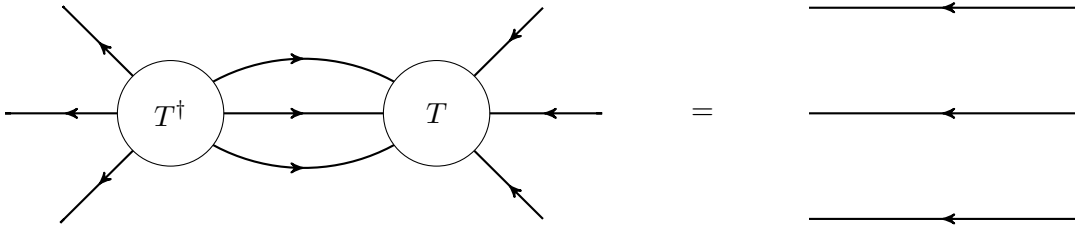

In the case of HaPPY networks and in random tensor networks the length of curves through the graph is defined by:
\begin{align}
L_G(\gamma) = \log (\dim \gamma).
\end{align}
Herein we will refer to this as the graph length. In section \ref{sec:lengthdefined} we will discuss an alternative notion of length in the network.

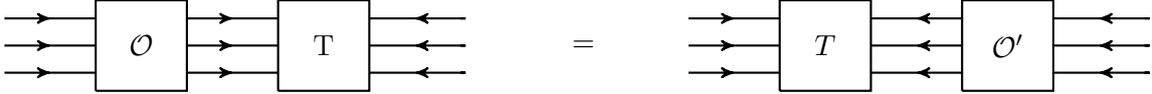
\begin{figure} 
\begin{tikzpicture}[scale=1.2]

\arrowedcurve{0.5}{0}{1.5}{0}{0}{180}
\arrowedcurve{0.5}{0.3}{1.5}{0.3}{0}{180}
\arrowedcurve{0.5}{-0.3}{1.5}{-0.3}{0}{180}
\arrowedcurve{-1.5}{0}{-0.5}{0}{0}{180}
\arrowedcurve{-1.5}{-0.3}{-0.5}{-0.3}{0}{180}
\arrowedcurve{-1.5}{0.3}{-0.5}{0.3}{0}{180}
\arrowedcurve{3.5}{0}{2.5}{0}{0}{180}
\arrowedcurve{3.5}{-0.3}{2.5}{-0.3}{0}{180}
\arrowedcurve{3.5}{0.3}{2.5}{0.3}{0}{180}

\node at (4.85,0) {$=$};

\arrowedcurve{6}{0}{7}{0}{0}{180}
\arrowedcurve{6}{0.3}{7}{0.3}{0}{180}
\arrowedcurve{6}{-0.3}{7}{-0.3}{0}{180}
\arrowedcurve{9}{0}{8}{0}{0}{180}
\arrowedcurve{9}{-0.3}{8}{-0.3}{0}{180}
\arrowedcurve{9}{0.3}{8}{0.3}{0}{180}
\arrowedcurve{11}{0}{10}{0}{0}{180}
\arrowedcurve{11}{-0.3}{10}{-0.3}{0}{180}
\arrowedcurve{11}{0.3}{10}{0.3}{0}{180}

\draw[thick,fill=white] (-0.5,-0.5)--(+0.5,-0.5)--(+0.5,+0.5)--(-0.5,+0.5)--(-0.5,-0.5);
\draw[thick,fill=white] (2-0.5,-0.5)--(2+0.5,-0.5)--(2+0.5,0.5)--(2-0.5,0.5)--(2-0.5,-0.5);
\node at (0,0) {$\mathcal{O}$};
\node at (2,0) {T};

\draw[thick,fill=white] (7,-0.5)--(8,-0.5)--(8,0.5)--(7,0.5)--(7,-0.5);
\draw[thick,fill=white] (9.5-0.5,-0.5)--(9.5+0.5,-0.5)--(9.5+0.5,+0.5)--(9.5-0.5,+0.5)--(9.5-0.5,-0.5);
\node at (9.5,0) {$\mathcal{O}'$};
\node at (7.5,0) {$T$};

\end{tikzpicture}
\caption{Illustration of the operator pushing operation. An operator $\mathcal{O}$ acting on a subset of size $n$ of a perfect tensor with $2n$ legs is equivalent to an operator $\mathcal{O}'=T^\dagger \mathcal{O} T$ acting on the complement.}
\label{fig:operatorpushing}
\end{figure}

\subsection{Maps defined from tensor networks} \label{sec:stateoncut}

We will see in section \ref{sec:graphlengthanddynamics} that describing the geometry of a spacelike slice of an evolving geometry requires a new definition of length in the network. Our definition will be built on a set of natural maps defined from the network. Earlier literature \cite{qi2013exact, pastawski2015holographic, hayden2016holographic, yang2016bidirectional} contains two types of mapping: maps between boundary legs and another set of ``bulk'' uncontracted legs, and maps between interior contracted legs and boundary legs. Our maps are of the second type. In the interest of being self contained and to fix language we describe this mapping in some detail below. 

Any cut which partitions the network defines two tensors, call them $C$ and $D$, which contract to give the boundary state. That is we can write
\begin{align}
\ket{\psi}_{A\bar{A}} = \sum_{IJK} C_{IJ}\delta^{JK}D_{KL}\ket{I_A}\ket{L_{\bar{A}}}.
\end{align}
This state can be formed by acting with the operators\footnote{If the boundary state is thought of as formed by contracting two states $\ket{C}=\sum C_{IJ}\ket{I}\ket{J}$ and $\ket{D}=\sum D_{KJ}\ket{K}\ket{J}$, these are just the operators that $\ket{C}$ and $\ket{D}$ are brought to under the Choi-Jamio{\l}kowski mapping.}
\begin{align}
C = \sum_{IJ} C_{Ij_1...j_n} \ketbra{I_A}{j_{{B}_1}} ...\bra{j_{{B}_n}}, \nonumber \\
D = \sum_{KJ} D_{K j_1...j_n} \ketbra{K_{\bar{A}}}{j_{\bar{B}_1}} ...\bra{j_{\bar{B}_n}},
\end{align}
on a collection of maximally entangled pairs. That is
\begin{align} \label{eq:slickpsi}
\ket{\psi}_{A\bar{A}} = (C\otimes D) \bigotimes_{i=1}^n \ket{\Psi^+}_{\bar{B}_i \bar{B}_i}.
\end{align}

In this picture $C$ and $D$ act as maps from an interior Hilbert space onto the boundary. There is freedom in how we choose the operators $C$ and $D$. For example, we could form the same state by contraction with a different choice of entangled states $\ket{\Psi_i}$ by writing
\begin{align} \label{eq:slickpsifree}
\ket{\psi}_{A\bar{A}} &= (C \Lambda^{-1} \otimes D \bar{\Lambda}^{-1}) \bigotimes_{i=1}^n(\lambda^i\otimes\bar{\lambda}^i)\ket{\Psi^+}_{B_i \bar{B}_i} \nonumber \\
&= (C'\otimes D') \bigotimes_{i=1}^n \ket{\Psi_i}_{B_i \bar{B}_i},
\end{align}
where $\Lambda = \bigotimes_i \lambda^i$ and $\bar{\Lambda}=\bigotimes_i \bar{\lambda}^i$. Additionally, we can move an operator $\lambda^i$ onto the $\bar{B}_i$ Hilbert space using the transpose rule. We could also choose operators $\Lambda$ and $\bar{\Lambda}$ which are not product. In this case we can no longer write $\bigotimes_i \ket{\Psi_i}$ for the state acted on by $C$ and $D$. 

The general expression for $\ket{\psi}$ without placing assumptions on the form of the projecting state is
\begin{align} \label{eq:psifromgamma}
\ket{\psi}_{A\bar{A}} &= (C\otimes D) \ket{\Psi}_{B \bar{B}},
\end{align}
where we label the Hilbert space $\bigotimes_i B_i$ by $B$ and $\bigotimes_i \bar{B}_i$ by $\bar{B}$. We give the graphical description of this expression in figure \ref{fig:cutmapprocedure}a. Eq. \ref{eq:psifromgamma} expresses the boundary state as the output of two operators acting on a state localized to the cut $\gamma$. This suggests a natural mapping to the cut, 
\begin{align} \label{eq:stateoncut}
\ket{\gamma}_{B \bar{B}}=C^\dagger C \otimes D^\dagger D \ket{\Psi}_{B \bar{B}}.
\end{align} 
This expression for the state on a cut is given graphically as figure \ref{fig:cutmapprocedure}b. 

\begin{figure}
\begin{subfigure}[b]{.5\textwidth}
  \centering
\begin{tikzpicture}[scale=0.3]

\draw[thick,postaction={on each segment={mid arrow}}]  (-3,-6)--(-3,-3.75);
\draw[thick,postaction={on each segment={mid arrow}}]  (0,-6)--(0,-3.75);
\draw[thick,postaction={on each segment={mid arrow}}] (3,-6)-- (3,-3.75);
\draw[thick,postaction={on each segment={mid arrow}}]  (-3,0)--(-3,-2.5);
\draw[thick,postaction={on each segment={mid arrow}}]  (0,0)--(0,-2.5);
\draw[thick,postaction={on each segment={mid arrow}}] (3,0)-- (3,-2.5);
\draw[thick,fill=white] (-5,-2) -- (5,-2) -- (5,-4) -- (-5,-4) -- (-5,-2);
\node at (0,-3) {$\ket{\Psi}$};

\foreach \x in {0,...,3}
	\draw[thick,postaction={on each segment={mid arrow}}] (30+40*\x:7) -- (30+40*\x:5);

\begin{scope}[shift={(0,-6)}]
\foreach \x in {0,...,3}
	\draw[thick,postaction={on each segment={mid arrow}}] (30+40*\x:-7) -- (30+40*\x:-5);
\end{scope} 

\draw[thick,fill=white] (-5,0) to [out=90,in=180] (0,5) to [out=0,in=90] (5,0) -- (-5,0);
\node at (0,2.25) {\Huge{$D$}};

\draw[thick,fill=white] (-5,-6) to [out=-90,in=180] (0,-11) to [out=0,in=-90] (5,-6) -- (-5,-6);
\node at (0,-8.25) {\Huge{$C$}};

\node at (0,-20) { };

\end{tikzpicture}
\caption{}
\end{subfigure}
\begin{subfigure}[b]{.5\textwidth}
  \centering
\begin{tikzpicture}[scale=0.3]

\draw[thick,->] (0,12) to [out=-30,in=90] (5,6);
\draw[thick] (5,6) to [out=-90,in=30] (0,0);
\draw[thick,->] (0,12) to [out=-70,in=90] (2,6);
\draw[thick] (2,6) to [out=-90,in=70] (0,0);
\draw[thick,->] (0,12) to [out=-110,in=90] (-2,6);
\draw[thick] (-2,6) to [out=-90,in=110] (0,0);
\draw[thick,->] (0,12) to [out=-150,in=90] (-5,6);
\draw[thick] (-5,6) to [out=-90,in=150] (0,0); 

\draw[thick,->] (0,-18) to [out=30,in=-90] (5,-12);
\draw[thick] (5,-12) to [out=90,in=-30] (0,-6);
\draw[thick,->] (0,-18) to [out=70,in=-90] (2,-12);
\draw[thick] (2,-12) to [out=90,in=-70] (0,-6);
\draw[thick,->] (0,-18) to [out=110,in=-90] (-2,-12);
\draw[thick] (-2,-12) to [out=90,in=-110] (0,-6);
\draw[thick,->] (0,-18) to [out=150,in=-90] (-5,-12);
\draw[thick] (-5,-12) to [out=90,in=-150] (0,-6);

\draw[thick,postaction={on each segment={mid arrow}}] (0,14) -- (0,12);
\draw[thick,postaction={on each segment={mid arrow}}] (3,14) -- (3,12);
\draw[thick,postaction={on each segment={mid arrow}}] (-3,14) -- (-3,12);

\draw[thick,postaction={on each segment={mid arrow}}] (0,-20) -- (0,-18);
\draw[thick,postaction={on each segment={mid arrow}}] (3,-20) -- (3,-18);
\draw[thick,postaction={on each segment={mid arrow}}] (-3,-20) -- (-3,-18);

\draw[thick,postaction={on each segment={mid arrow}}]  (-3,-6)--(-3,-3.75);
\draw[thick,postaction={on each segment={mid arrow}}]  (0,-6)--(0,-3.75);
\draw[thick,postaction={on each segment={mid arrow}}] (3,-6)-- (3,-3.75);
\draw[thick,postaction={on each segment={mid arrow}}]  (-3,0)--(-3,-2.5);
\draw[thick,postaction={on each segment={mid arrow}}]  (0,0)--(0,-2.5);
\draw[thick,postaction={on each segment={mid arrow}}] (3,0)-- (3,-2.5);
\draw[thick,fill=white] (-5,-2) -- (5,-2) -- (5,-4) -- (-5,-4) -- (-5,-2);
\node at (0,-3) {$\ket{\Psi}$};

\draw[thick,fill=white] (-5,0) to [out=90,in=180] (0,5) to [out=0,in=90] (5,0) -- (-5,0);
\node at (0,2.25) {\Huge{$D$}};

\draw[thick,fill=white] (-5,-6) to [out=-90,in=180] (0,-11) to [out=0,in=-90] (5,-6) -- (-5,-6);
\node at (0,-8.25) {\Huge{$C$}};

\draw[thick,fill=white] (-5,12) to [out=-90,in=180] (0,7) to [out=0,in=-90] (5,12) -- (-5,12);
\node at (0,10) {\Huge{$D^\dagger$}};

\draw[thick,fill=white] (-5,-18) to [out=90,in=180] (0,-13) to [out=0,in=90] (5,-18) -- (-5,-18);
\node at (0,-15.75) {\Huge{$C^\dagger$}};
\end{tikzpicture}
\caption{}
\end{subfigure}
\caption{(a) Graphical description of \ref{eq:psifromgamma}, which gives a tensor network state in terms of the two block tensors $C$ and $D$ defined by a cut $\gamma$. (b) Graphical description of eq \ref{eq:stateoncut}, which computes the state on a cut $\gamma$.}
\label{fig:cutmapprocedure}
\end{figure}
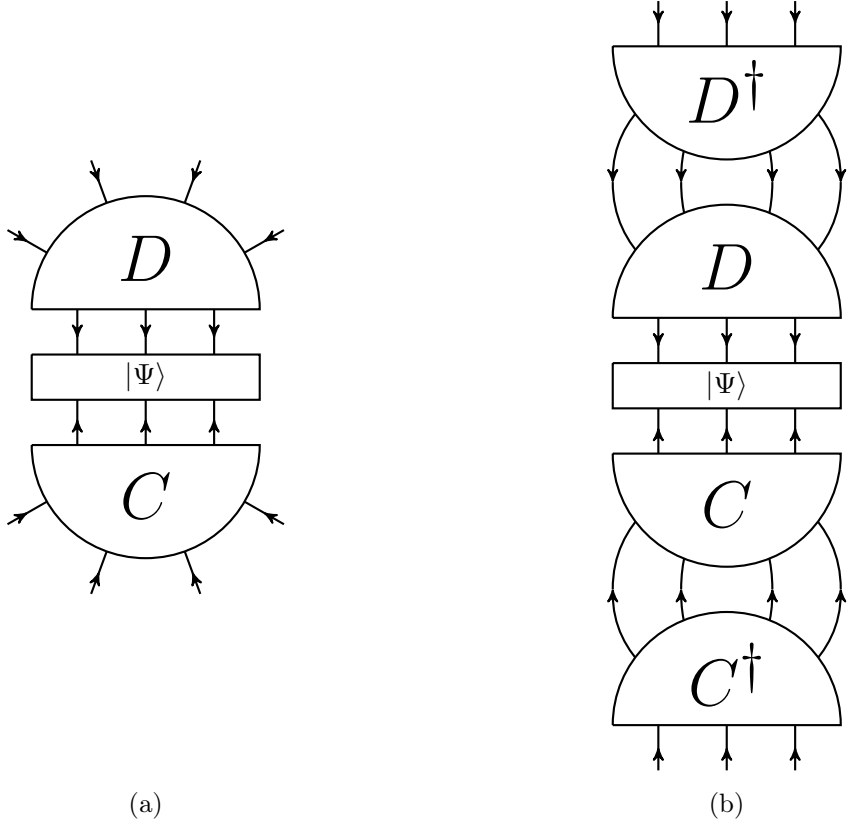 

It will be useful in generalizing away from the static case to remind ourselves of the HaPPY network construction. The key result of this earlier work is that
\begin{align} \label{eq:RT}
S(A) = \underset{\gamma^A}{\min} \, L_G(\gamma^A),
\end{align}
Where $A$ is a single boundary interval and the minimization is taken over cuts $\gamma_A$ enclosing $A$. To show this, the authors show that both sides of a minimal cut can be interpreted as a unitary circuit from the cut legs and a subset of the boundary legs to the remainder of the boundary legs. We restate this result in a slightly changed language as follows.

\begin{theorem} \label{thm:happy}
In a HaPPY network, a cut which is anchored on a boundary interval $A$ and crosses a minimal number of legs defines a map from the cut legs to the interval $A$ which is an isometry.
\end{theorem}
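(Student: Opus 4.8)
The plan is to realize the map defined by the $A$-side of the cut --- the operator $C$ of eq.~\ref{eq:psifromgamma}, which sends the cut legs $B$ to the boundary interval $A$ --- as a composition of isometries, and to use minimality of $\gamma$ to guarantee that such a composition exists. I rely on two facts already established above: (i) a perfect tensor is an isometry from any set of at most $n$ of its $2n$ legs to the complementary legs, which is \ref{eq:unitarycondition} together with the remark that contracting indices drops the number of input legs below $n$; and (ii) a composition of isometries is again an isometry.

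First I would contract the tensors lying on the $A$-side of $\gamma$ one at a time, building $C$ as a sequence of partial maps whose ``frontier'' interpolates between $\gamma$ and $A$. Beginning with the identity on the cut legs, I absorb a tensor $T$ only when at most $n$ of its legs point back toward the already-absorbed region (equivalently, at least $n$ of its legs face forward toward $A$). By (i) such a $T$ acts as an isometry from its back-facing legs, which are contracted into the current image, to its forward legs, which together with any boundary legs of $T$ in $A$ become the new frontier. By (ii) the running composite stays an isometry, so if the entire $A$-side region can be swept out in this greedy fashion the resulting $C$ is an isometry, which is the claim.

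The real content is therefore to show that minimality of $\gamma$ forces the greedy sweep to cover exactly the $A$-side region. I would use an exchange argument: if a tensor $T$ adjacent to the current frontier had more than $n$ back-facing legs, then deforming the cut to the far side of $T$ --- moving $T$ across the frontier --- would replace those $>n$ crossed legs by the $<n$ remaining ones and strictly shorten the cut, contradicting minimality. Applied to $\gamma$ itself this shows every tensor adjacent to $\gamma$ has at most $n$ cut legs, so the sweep can start; propagating it as the frontier advances shows an absorbable tensor always exists until the frontier reaches $A$. The non-positive curvature condition on the graph is what ensures the deformed frontier is again an admissible cut anchored on $A$ and that the sweep neither stalls in the interior nor overshoots into $\bar A$.

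The step I expect to be the main obstacle is exactly this last one: arguing cleanly that the greedy frontier terminates at $A$ rather than halting at some interior ``greedy geodesic,'' and that the exchange inequality continues to apply to the intermediate deformed frontiers and not only to the original minimal cut. This is where the single-interval assumption and the non-positive curvature enter, and where I would have to check that pushing a tensor across the frontier genuinely yields a competing cut of the allowed type. Granting that geometric bookkeeping --- which is the substance of the greedy-algorithm analysis of \cite{pastawski2015holographic} --- the isometry property is then immediate from (i) and (ii).
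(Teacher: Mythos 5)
Your proposal matches the paper's treatment of this statement: the paper does not reprove it, but presents it as a restatement of the greedy-algorithm result of \cite{pastawski2015holographic}, in which the region between a minimal cut and $A$ is built up as a composition of perfect-tensor isometries (a ``unitary circuit'' from the cut legs and part of the boundary to the rest of the boundary). Your sweep-and-exchange sketch, with the geometric bookkeeping of the greedy frontier deferred to that same reference, is exactly the argument the paper is invoking.
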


We will refer to cuts of a network that define isometries on both sides as \emph{isometric cuts}. This result allows us to calculate the state defined on a cut as given in eq. \ref{eq:stateoncut} whenever the cut is minimal. Recalling that all of the contractions in a HaPPY network are performed with maximally entangled pairs, \ref{eq:stateoncut} becomes
\begin{align} 
\ket{\gamma}_{B \bar{B}} = (C^\dagger C\otimes D^\dagger D) \bigotimes_{i=1}^n \ket{\Psi^+}_{B_i \bar{B}_i}.
\end{align}
When the cut is minimal theorem \ref{thm:happy} gives $C^\dagger C=\mathcal{I}$ and $D^\dagger D=\mathcal{I}$, so the state on the cut is just a collection of maximally entangled pairs, with one pair for each edge cut by $\gamma$. 

In fact, we can straightforwardly extend theorem \ref{thm:happy} to an if and only if statement as follows.

\begin{theorem}\label{thm:converse}
In a HaPPY network, a cut $\gamma$ enclosing a single boundary interval defines an isometry on both sides if and only if $\log \dim \gamma$ is minimal.
\end{theorem}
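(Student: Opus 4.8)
The plan is to prove the two implications separately. The direction ``minimal $\Rightarrow$ isometric'' is almost immediate from Theorem \ref{thm:happy}, while the converse ``isometric $\Rightarrow$ minimal'' is the genuinely new content and rests on the saturation condition of the rank bound \ref{eq:rankbound} together with the RT formula \ref{eq:RT}.

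For the direction ``minimal $\Rightarrow$ isometric'' I would argue as follows. Since all edges of a HaPPY network carry maximally entangled pairs of a fixed bond dimension $D$, a cut crossing $n$ legs has $\dim\gamma = D^n$ and hence $\log\dim\gamma = n\log D$; minimizing $\log\dim\gamma$ over cuts enclosing $A$ is therefore the same as crossing a minimal number of legs. A single curve separating $A$ from $\bar A$ is at once a cut enclosing $A$ and a cut enclosing $\bar A$, and its leg count does not depend on which region we call enclosed, so it is minimal for $A$ if and only if it is minimal for $\bar A$. Applying Theorem \ref{thm:happy} first with enclosed interval $A$ and then with enclosed interval $\bar A$ shows that the two block maps $C$ and $D$ of eq. \ref{eq:psifromgamma} are both isometries, i.e. $C^\dagger C = \mathcal I$ and $D^\dagger D = \mathcal I$. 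This reproduces the observation already made below eq. \ref{eq:stateoncut}.

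For the converse ``isometric $\Rightarrow$ minimal'' I would return to the derivation of \ref{eq:rankbound}. There the states on the two sides of the cut are $\ket{\hat K}_A = C\ket{K}_B$ and $\ket{\hat K}_{\bar A} = D\ket{K}_{\bar B}$, where $\ket{K}$ runs over a basis of the cut Hilbert space. The isometry conditions say precisely that $\braket{\hat K}{\hat K'}_A = \delta_{KK'}$ and $\braket{\hat K}{\hat K'}_{\bar A} = \delta_{KK'}$, so both families are orthonormal. This is exactly the equality condition quoted after \ref{eq:rankbound}: the decomposition $\ket{\psi} = \sum_K \ket{\hat K}_{\bar A}\ket{\hat K}_A$ is then a genuine Schmidt decomposition with $\dim\gamma$ equal nonzero coefficients, so $\rho_A$ has rank exactly $\dim\gamma$ and $S(A) = \log\dim\gamma$. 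Invoking the RT formula \ref{eq:RT}, $S(A) = \min_{\gamma^A}\log\dim\gamma^A$, together with the fact that every cut obeys $\dim\gamma \ge \dim\gamma_{\min}$, forces $\log\dim\gamma = \min_{\gamma^A}\log\dim\gamma^A$, so $\gamma$ is minimal.

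The only real obstacle is the converse, and within it the step of turning the isometry property into saturation of the entropy bound; the rest is bookkeeping. The point to be careful about is that isometry on \emph{both} sides is needed, since orthonormality of only one family controls the rank of $\rho_A$ from a single direction and would not by itself pin the Schmidt rank to $\dim\gamma$. I would also flag the mild assumption that all bond dimensions are equal, which is what makes ``minimal $\log\dim\gamma$'' and ``minimal number of legs crossed'' interchangeable; if one wished to allow varying bond dimensions the same argument goes through verbatim with $\log\dim\gamma$ in place of leg count, provided Theorem \ref{thm:happy} is read in its $\dim$-minimizing form.
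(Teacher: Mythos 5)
Your proof is correct and takes essentially the same route as the paper: the forward direction is delegated to theorem \ref{thm:happy}, and the converse rests on showing that a two-sided isometric cut forces $S(\rho_A)=\log\dim\gamma$, after which comparison with the upper bound obeyed by every other cut yields minimality. The differences are cosmetic: the paper computes $\rho_A = CC^\dagger$ and uses that this is a projector of rank $\dim\gamma$, where you phrase the identical computation as a Schmidt decomposition with equal nonzero coefficients (the saturation condition of \ref{eq:rankbound}), and the paper closes by applying the rank bound \ref{eq:rankbound} to arbitrary cuts $\gamma'$ where you invoke the already-established RT formula \ref{eq:RT}, which packages the same bound.
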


\begin{proof}
That a cut being minimal implies the maps it defines are isometries is given as theorem \ref{thm:happy}. 

Next we show that an isometric cut is minimal. Consider the boundary state as written in \ref{eq:psifromgamma}, which corresponds to the diagram in figure \ref{fig:cutmapprocedure}a. To form the density matrix on a region $A$ we draw an arrow reversed duplicate of \ref{fig:cutmapprocedure}a, and contract the $\bar{A}$ legs. Then since $D^\dagger D = \mathcal{I}$ and $\ket{\Psi} = \bigotimes_i \ket{\Psi^+}$ we are left with
\begin{align}
\rho_A = C C^\dagger.
\end{align}
To get the normalization factor note that $\tr(CC^\dagger) = \tr (C^\dagger C) = \log \dim \gamma$. Further, since $CC^\dagger$ is a projector its non-zero eigenvalues are equal to one. Using these two facts we have that
\begin{align} \label{eq:bb}
S(\rho_A) = \log \dim \gamma.
\end{align}
At the same time, the bound \ref{eq:rankbound} gives that $S(\rho_A) \leq \log\dim \gamma'$ for any cut $\gamma'$ in the network. Combining this with \ref{eq:bb} we have
\begin{align}
\log \dim \gamma \leq \log \dim \gamma'
\end{align}
for any cut in the network. Thus any cut which defines an isometry on both sides is minimal.\end{proof}

As a consequence of theorem \ref{thm:converse} the RT formula for HaPPY networks can be restated as
\begin{align} \label{eq:newRT}
S(A) = L(\gamma_{iso}^A).
\end{align}
Here $\gamma$ is any isometric cut enclosing $A$. We will refer to this as the \emph{isometric cut formula}\footnote{As with theorems \ref{thm:happy} and \ref{thm:converse} the isometric cut formula is proven only for boundary regions consisting of a single interval.}. As we will see later, this form for the Ryu-Takayangi formula is more easily generalized to dynamic states than the statement using minimization.  

\section{Length and extremal curves in tensor networks} \label{sec:graphlengthanddynamics}

\subsection{Lessons from the maximin formula} \label{sec:maximinlessons}

The maximin formula states that, for holographic states, the entropy of a boundary region $A$ can be calculated as
\begin{align} \label{eq:maximin}
S(A) = \underset{\Sigma}{\text{max}} \left(\underset{\gamma_A}{\min}\,\, L(\gamma_A)  \right).
\end{align}
That is consider a spacelike slice of the boundary and a subset $A$ of this slice. On each spacelike surface $\Sigma$ which has the chosen boundary, calculate the length of the minimal surface homologous to $A$. From the set of all those lengths choose the largest element. The resulting length will give the entropy $S(A)$. 

To translate this statement into tensor network language it is clear that we need to think of a boundary state as associated with a set of networks. Indeed, as mentioned preceding equation \ref{eq:slickpsifree}, there are various ways we can modify a network while preserving the boundary state. Beginning with a defining network these transformations give a set of networks corresponding to a single boundary state. We will explore the possibility that a set of networks generated in this way can be searched over to calculate boundary entropies, analogous to optimization over spacelike slices in the maximin formula. 

However, suppose that we have decided on a set of networks and that the maximin formula is true for these networks and their boundary state. Then the maximization step of the maximin formula gives that the minimal lengths in each network are bounded above by the entropy,
\begin{align} \label{eq:lengthentropy}
\underset{\gamma_A}{\text{min}} \, L(\gamma_A) \leq S(A).
\end{align}
This is a key inequality restricting the possible definitions of the length $L(\gamma_A)$ in the network. Indeed, suppose that we took $L(\gamma)=L_G(\gamma)$, the graph length. Then the rank bound on the entanglement entropy given in \ref{eq:rankbound} says that
\begin{align}
S(A) \leq \underset{\gamma_A}{\text{min}}\, (\log(\dim \gamma_A)) = \underset{\gamma_A}{\text{min}} \, L_G(\gamma).
\end{align}
This is the opposite inequality to \ref{eq:lengthentropy}, so we have that $S(A) = \underset{\gamma_A}{\text{min}} \, L_G(\gamma)$ for all networks in the set optimized over. This means that every network in the set must contain the extremal curve anchored on $A$. Repeating this for each of the possible boundary regions, we would conclude that every network in the set must contain the extremal curves for each boundary region. In a dynamic spacetime however no one slice should contain all of the extremal curves. Taking the graph length then prevents any description of the geometry of slices other than the constant time slices of static spacetimes.

We see that a requirement for describing spacelike slices of dynamic geometries using tensor networks is a new definition of length in networks. With a definition of length in hand, one approach is to determine the extremal cut by variation over the set of networks. However, it turns out to be simpler to generalize the isometric cut formula \ref{eq:newRT} than to try and generalize the statement \ref{eq:RT} of RT in terms of minimal cuts. Indeed, a possible generalization of \ref{eq:newRT} is just \ref{eq:newRT} again, with the modification that the isometric cut can now be chosen from within a set of networks. We find in the next section that there is a simple way to define $L(\gamma)$ that has reasonable geometric properties and which extends the isometric cut formula to the dynamic setting.

\subsection{A definition of length in tensor networks} \label{sec:lengthdefined}

From our analysis of the maximin formula we know that the definition of length will need to be changed. We claim that there is a simple way to define $L(\gamma)$ which guarantees $S(A) = L(\gamma_{iso}^A)$ whenever such an isometric cut exists in the set of networks associated with the boundary state. To see this first calculate the boundary state on $A$ in terms of the operators defined by an isometric cut $\gamma_{iso}^A$. This is most easily done by looking again at figure \ref{fig:cutmapprocedure} and considering an arrow reversed duplicate of the network in figure \ref{fig:cutmapprocedure}a. We then contract the $\bar{A}$ legs and use that $D^\dagger D = \mathcal{I}$, which yields
\begin{align}
\rho_A = C \tr_{\bar{B}} (\ketbra{\Psi}{\Psi}) C^\dagger.
\end{align}
Since $C$ is an isometry, we have that $S(\rho_A) = S(\tr_{\bar{B}}\ketbra{\Psi}{\Psi})$. Next, consider the length $L(\gamma_{iso}^A)$. As discussed in section \ref{sec:stateoncut} any cut has a state associated with it, given by \ref{eq:stateoncut}. In particular since $\gamma_{iso}^A$ is an isometric cut we have
\begin{align} \label{eq:isostate}
\ket{\gamma_{iso}}= \ket{\Psi}_{B \bar{B}}.
\end{align}
From this it is clear that defining the length as the entropy of one side of $\ket{\Psi}$ would correctly compute $S(A)$. We prefer to write this more symmetrically as the mutual information 
\begin{align} \label{eq:mutualdef}
L(\gamma_{iso}^A) = \frac{1}{2}I_{\ket{\gamma}}(B:\bar{B}).
\end{align}
For a HaPPY network, the state $\ket{\Psi}$ is a product of maximally entangled pairs and the length of a minimal cut reduces to the graph length. What about an arbitrary cut in a HaPPY network? In this case we make use of the fact that when $\ket{\Psi}$ is product, the length of a minimal cut becomes a sum over each leg in the cut
\begin{align} 
L(\gamma_{iso}^A) = \sum_i \frac{1}{2}I_{\ket{\gamma}}(B_i:\bar{B}_i).
\end{align}
Thus it is natural to associate a length to each leg individually,
\begin{align} \label{eq:indivlength}
L(\gamma_i)=\frac{1}{2}I_{\ket{\gamma}}(B_i:\bar{B}_i).
\end{align}
For an arbitrary curve we can define its length to be the sum of the lengths of each leg, where we calculate the length of a single leg by finding an isometric cut containing that leg. In the HaPPY network this assigns all legs a length of $\log \dim \gamma_i$.

For a non-HaPPY network we can attempt to assign lengths to every leg by the same procedure of looking at the product factors of the isometric cuts which are in that network. In general however not every leg will be part of an isometric cut, and it may not be possible to assign a length to every leg. This manner of building up the lengths of arbitrary cuts using the lengths of isometric cuts is reminiscent of the differential entropy formula \cite{balasubramanian2014bulk, headrick2014holographic}. Extremal cuts in the differential entropy formula play the role of isometric cuts in the procedure outlined here. This is consistent with our interpretation of \ref{eq:newRT} as applying to dynamic spacetimes, since the isometric cuts of \ref{eq:newRT} are playing the role of extremal curves.

As a basic check on this definition of length, we should confirm that all isometric cuts passing a single leg will assign the same value of length to that leg. Indeed, for any isometric cut crossing a segment $\gamma_i$ the length of that segment is given by the mutual information $I(B:\bar{B})/2$ computed in the projecting state $\ket{\Psi_i}$, and is independent of the operators $C$ and $D$ defined by whichever isometric cut has been chosen. Further, the projecting state $\ket{\Psi_i}$ is fixed for a given network. 

We can also notice that contracting the network in a different basis has no effect on the length of a cut. This follows from the invariance of the mutual information under local unitaries. Finally, it would be nice to see that if a cut $\gamma$ is composed of two segments $\gamma_1$ and $\gamma_2$ then 
\begin{align}\label{eq:addit}
L(\gamma) = L(\gamma_1) + L(\gamma_2).
\end{align}
We'll refer to this as the additivity property. The additivity property is not guaranteed by our definition of length, but rather depends on the structure of the state $\ket{\Psi}$ appearing in \ref{eq:isostate}. For example if this state is product across each leg, that is if
\begin{align}
\ket{\gamma_{iso}} = \bigotimes_{i=1}^n \ket{\Psi_i},
\end{align}
then the length is additive at the level of individual legs. However, in other cases it may happen that entanglement is present across the $B_i$, in which case the length will not be additive across individual legs. In the dynamic example given in section \ref{sec:dynamicexample} we will allow legs which are contracted with a common vertex to share entanglement, meaning additivity may fail at the level of a small number (in the case there, three) legs.


\subsection{A static example} \label{sec:staticexample}

As an illustration of this assignment of length we look at a network which does not satisfy RT when the graph length is used, but does when using the mutual information based definition. Our example is based on the network shown in figure \ref{fig:sixlegexample}a. The six legged tensors are perfect tensors, and the two legged tensors shown as solid black dots are maximally entangled pairs used to form the contraction. This is a HaPPY network and the boundary entropies are all given by the minimal number of legs cut to separate off a boundary region. 

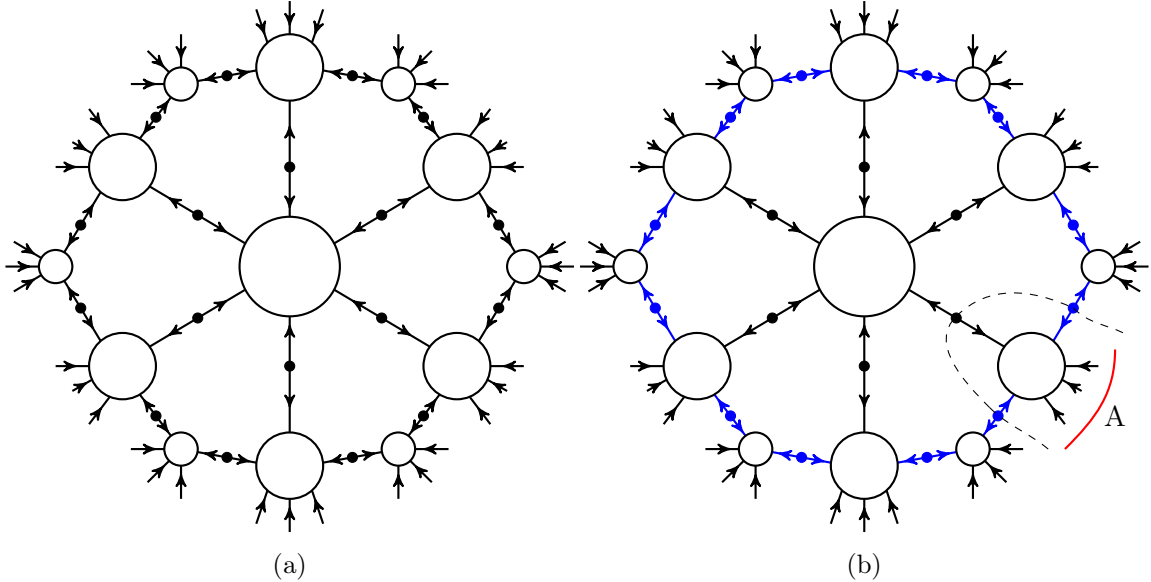
\begin{figure}
\centering
\begin{subfigure}{.5\textwidth}
  \centering
    \begin{tikzpicture}[scale=0.22]
\bellbrasegment{0}{1}{0}{11}{0.3}
\bellbrasegment{0}{-1}{0}{-11}{0.3}
\bellbrasegment{2}{1}{9}{5.2}{0.3}
\bellbrasegment{2}{-1}{9}{-5.2}{0.3}
\bellbrasegment{-2}{1}{-9}{5.2}{0.3}
\bellbrasegment{-2}{-1}{-9}{-5.2}{0.3}
\bellbrasegment{11}{5}{14}{0}{0.3}
\bellbrasegment{11}{-5}{14}{0}{0.3}

\bellbrasegment{-11}{5}{-14}{0}{0.3}
\bellbrasegment{-11}{-5}{-14}{0}{0.3}

\bellbrasegment{-9.5}{7}{-6.5}{11}{0.3}
\bellbrasegment{9.5}{7}{6.5}{11}{0.3}
\bellbrasegment{-9.5}{-7}{-6.5}{-11}{0.3}
\bellbrasegment{9.5}{-7}{6.5}{-11}{0.3}

\bellbrasegment{-1}{12}{-6.5}{11}{0.3}
\bellbrasegment{1}{12}{6.5}{11}{0.3}
\bellbrasegment{-1}{-12}{-6.5}{-11}{0.3}
\bellbrasegment{1}{-12}{6.5}{-11}{0.3}

\arrowedsegment{14}{6}{11.5}{6}
\polararrowedsegment{13}{7.5}{210}{2}
\arrowedsegment{12}{9.5}{10.5}{7.5}

\arrowedsegment{-14}{6}{-11.5}{6}
\polararrowedsegment{-13}{7.5}{-30}{2}
\arrowedsegment{-12}{9.5}{-10.5}{7.5}

\arrowedsegment{-14}{-6}{-11.5}{-6}
\polararrowedsegment{-13}{-7.5}{30}{2}
\arrowedsegment{-12}{-9.5}{-10.5}{-7.5}

\arrowedsegment{14}{-6}{11.5}{-6}
\polararrowedsegment{13}{-7.5}{150}{2}
\arrowedsegment{12}{-9.5}{10.5}{-7.5}

\arrowedsegment{0}{-16}{0}{-13}
\arrowedsegment{2}{-15.5}{1}{-12.5}
\arrowedsegment{-2}{-15.5}{-1}{-12.5}

\arrowedsegment{0}{16}{0}{13}
\arrowedsegment{2}{15.5}{1}{12.5}
\arrowedsegment{-2}{15.5}{-1}{12.5}

\arrowedsegment{17}{0}{14}{0}
\arrowedsegment{16.5}{1.5}{14}{0}
\arrowedsegment{16.5}{-1.5}{14}{0}

\arrowedsegment{-17}{0}{-14}{0}
\arrowedsegment{-16.5}{1.5}{-14}{0}
\arrowedsegment{-16.5}{-1.5}{-14}{0}

\arrowedsegment{9.5}{11}{6.5}{11}
\arrowedsegment{6.5}{14}{6.5}{11}
\arrowedsegment{8.5}{13}{6.5}{11}

\arrowedsegment{-9.5}{11}{-6.5}{11}
\arrowedsegment{-6.5}{14}{-6.5}{11}
\arrowedsegment{-8.5}{13}{-6.5}{11}

\arrowedsegment{9.5}{-11}{6.5}{-11}
\arrowedsegment{6.5}{-14}{6.5}{-11}
\arrowedsegment{8.5}{-13}{6.5}{-11}

\arrowedsegment{-9.5}{-11}{-6.5}{-11}
\arrowedsegment{-6.5}{-14}{-6.5}{-11}
\arrowedsegment{-8.5}{-13}{-6.5}{-11}

\draw[thick,fill=white] (0,0) circle [radius=3];

\draw[thick,fill=white] (10,6) circle [radius=2];
\draw[thick,fill=white] (10,-6) circle [radius=2];
\draw[thick,fill=white] (-10,6) circle [radius=2];
\draw[thick,fill=white] (-10,-6) circle [radius=2];
\draw[thick,fill=white] (0,12) circle [radius=2];
\draw[thick,fill=white] (0,-12) circle [radius=2];

\draw[thick,fill=white] (14,0) circle [radius=1];
\draw[thick,fill=white] (-14,0) circle [radius=1];
\draw[thick,fill=white] (6.5,11) circle [radius=1];
\draw[thick,fill=white] (-6.5,11) circle [radius=1];
\draw[thick,fill=white] (6.5,-11) circle [radius=1];
\draw[thick,fill=white] (-6.5,-11) circle [radius=1];

\end{tikzpicture}
\caption{}
\end{subfigure}%
\hfill
\begin{subfigure}{.5\textwidth}
  \centering
  \begin{tikzpicture}[scale=0.22]

\bellbrasegment{0}{1}{0}{11}{0.3}
\bellbrasegment{0}{-1}{0}{-11}{0.3}
\bellbrasegment{2}{1}{9}{5.2}{0.3}
\bellbrasegment{2}{-1}{9}{-5.2}{0.3}
\bellbrasegment{-2}{1}{-9}{5.2}{0.3}
\bellbrasegment{-2}{-1}{-9}{-5.2}{0.3}

\bluebrasegment{11}{5}{14}{0}{0.3}
\bluebrasegment{11}{-5}{14}{0}{0.3}

\bluebrasegment{-11}{5}{-14}{0}{0.3}
\bluebrasegment{-11}{-5}{-14}{0}{0.3}

\bluebrasegment{-9.5}{7}{-6.5}{11}{0.3}
\bluebrasegment{9.5}{7}{6.5}{11}{0.3}
\bluebrasegment{-9.5}{-7}{-6.5}{-11}{0.3}
\bluebrasegment{9.5}{-7}{6.5}{-11}{0.3}

\bluebrasegment{-1}{12}{-6.5}{11}{0.3}
\bluebrasegment{1}{12}{6.5}{11}{0.3}
\bluebrasegment{-1}{-12}{-6.5}{-11}{0.3}
\bluebrasegment{1}{-12}{6.5}{-11}{0.3}

\arrowedsegment{14}{6}{11.5}{6}
\polararrowedsegment{13}{7.5}{210}{2}
\arrowedsegment{12}{9.5}{10.5}{7.5}

\arrowedsegment{-14}{6}{-11.5}{6}
\polararrowedsegment{-13}{7.5}{-30}{2}
\arrowedsegment{-12}{9.5}{-10.5}{7.5}

\arrowedsegment{-14}{-6}{-11.5}{-6}
\polararrowedsegment{-13}{-7.5}{30}{2}
\arrowedsegment{-12}{-9.5}{-10.5}{-7.5}

\arrowedsegment{14}{-6}{11.5}{-6}
\polararrowedsegment{13}{-7.5}{150}{2}
\arrowedsegment{12}{-9.5}{10.5}{-7.5}

\arrowedsegment{0}{-16}{0}{-13}
\arrowedsegment{2}{-15.5}{1}{-12.5}
\arrowedsegment{-2}{-15.5}{-1}{-12.5}

\arrowedsegment{0}{16}{0}{13}
\arrowedsegment{2}{15.5}{1}{12.5}
\arrowedsegment{-2}{15.5}{-1}{12.5}

\arrowedsegment{17}{0}{14}{0}
\arrowedsegment{16.5}{1.5}{14}{0}
\arrowedsegment{16.5}{-1.5}{14}{0}

\arrowedsegment{-17}{0}{-14}{0}
\arrowedsegment{-16.5}{1.5}{-14}{0}
\arrowedsegment{-16.5}{-1.5}{-14}{0}

\arrowedsegment{9.5}{11}{6.5}{11}
\arrowedsegment{6.5}{14}{6.5}{11}
\arrowedsegment{8.5}{13}{6.5}{11}

\arrowedsegment{-9.5}{11}{-6.5}{11}
\arrowedsegment{-6.5}{14}{-6.5}{11}
\arrowedsegment{-8.5}{13}{-6.5}{11}

\arrowedsegment{9.5}{-11}{6.5}{-11}
\arrowedsegment{6.5}{-14}{6.5}{-11}
\arrowedsegment{8.5}{-13}{6.5}{-11}

\arrowedsegment{-9.5}{-11}{-6.5}{-11}
\arrowedsegment{-6.5}{-14}{-6.5}{-11}
\arrowedsegment{-8.5}{-13}{-6.5}{-11}

\draw[red,thick] (12,-11) to [out=45,in=-90] (15,-5);
\node at (15,-9) {A};

\draw[dashed] (11,-11) to [out=140,in=-45] (7.5,-8.5);
\draw[dashed] (7.5,-8.5) to [out=135,in=-135] (5.5,-3);
\draw[dashed] (5.5,-3) to [out=45,in=155] (12.5,-2.5);
\draw[dashed] (12.5,-2.5) to [out=-45,in=155] (15.5,-4);

\draw[thick,fill=white] (0,0) circle [radius=3];

\draw[thick,fill=white] (10,6) circle [radius=2];
\draw[thick,fill=white] (10,-6) circle [radius=2];
\draw[thick,fill=white] (-10,6) circle [radius=2];
\draw[thick,fill=white] (-10,-6) circle [radius=2];
\draw[thick,fill=white] (0,12) circle [radius=2];
\draw[thick,fill=white] (0,-12) circle [radius=2];

\draw[thick,fill=white] (14,0) circle [radius=1];
\draw[thick,fill=white] (-14,0) circle [radius=1];
\draw[thick,fill=white] (6.5,11) circle [radius=1];
\draw[thick,fill=white] (-6.5,11) circle [radius=1];
\draw[thick,fill=white] (6.5,-11) circle [radius=1];
\draw[thick,fill=white] (-6.5,-11) circle [radius=1];

\end{tikzpicture}
\caption{}
\end{subfigure}
\caption{(a) A HaPPY network. Six legged vertices are perfect tensors and two legged projecting pairs are maximally entangled states. All boundary entropies are given by the graph length of a minimal cut in the network. (b) A network which satisfies the Ryu-Takayanagi formula using the mutual information based definition of length, but does not satisfy Ryu-Takayanagi when using the graph length. The blue dots and legs represent the projecting state given in eq. \ref{eq:newprojector}. The dashed line represents the isometric cut for the boundary region $A$.}
\label{fig:sixlegexample}
\end{figure}

For our example we replace the maximally entangled pairs around the edge of the network with another state, $\ket{\Psi_i}$, which for convenience we write in the form\footnote{It is always possible to write $\ket{\Psi}$ in this way because we can write $\ket{\Psi_i} = A \otimes B \ket{\Psi^+}$
and then use the transpose rule to move $B$ to the other subspace.}
\begin{align} \label{eq:newprojector}
\ket{\Psi_i}=(\mathcal{O} \otimes \mathcal{I}) \ket{\Psi^+}.
\end{align}
The modified network is shown in figure \ref{fig:sixlegexample}b. We claim that this network satisfies the RT formula using our new definition of length in terms of mutual information, but not using the graph length. To see this we begin by computing the entropy of the three boundary legs marked region A.

\begin{figure}

\begin{tikzpicture}[scale=0.2];
\draw[thick,postaction={on each segment={mid arrow}}] (0,0) -- (0,18);
\draw[thick,postaction={on each segment={mid arrow}}] (0,0) to [out=60,in=-60] (0,18);
\draw[thick,postaction={on each segment={mid arrow}}] (0,0) to [out=120,in=-120] (0,18);
\draw[thick,postaction={on each segment={mid arrow}}] (23,0) to [out=45,in=-45] (23,18);
\draw[thick,postaction={on each segment={mid arrow}}] (23,0) to [out=75,in=-75] (23,18);
\draw[thick,postaction={on each segment={mid arrow}}] (23,0) to [out=105,in=-105] (23,18);
\draw[thick,postaction={on each segment={mid arrow}}] (23,0) to [out=135,in=-135] (23,18);
\bluebrasegment{0}{18}{23}{18}{0.75}
\blueketsegment{0}{0}{23}{0}{0.75}
\arrowedsegment{0}{-3}{0}{-6}
\arrowedsegment{23}{-2}{23}{-6}
\arrowedsegment{0}{24}{0}{21}
\arrowedsegment{23}{24}{23}{20}
\arrowedsegment{-3}{0}{-6}{0}
\arrowedsegment{-6}{18}{-3}{18}
\arrowedsegment{20}{0}{24}{0}
\arrowedsegment{24}{18}{20}{18}
\arrowedsegment{29}{18}{25}{18}
\arrowedsegment{25}{0}{29}{0}
\draw[fill=white] (0,0) circle [radius=3];
\draw[fill=white] (0,18) circle [radius =3];
\draw[fill=white] (23,0) circle [radius=2];
\draw[fill=white] (23,18) circle [radius=2];
\node at (33,9) {$=$};

\draw[thick,postaction={on each segment={mid arrow}}] (40,18) to [out=0,in=0] (40,0);
\draw[thick,postaction={on each segment={mid arrow}}] (48,0) -- (48,18);

\draw[fill=blue] (56,0) circle [radius=0.75];
\draw[fill=blue] (56,18) circle [radius=0.75];

\draw[blue, thick, postaction={on each segment={mid arrow}}] (56,18) to [out=0,in=0] (56,0);
\draw[blue, thick, postaction={on each segment={mid arrow}}] (56,0) to [out=180,in=180] (56,18);
\draw[thick, postaction={on each segment={mid arrow}}] (69,0) to [out=180,in=180] (69,18);

\draw[dashed,thick] (69,18) -- (72,18);
\draw[dashed,thick] (69,0) -- (72,0);

\draw[dashed,thick] (40,18) -- (37,18);
\draw[dashed,thick] (40,0) -- (37,0);

\end{tikzpicture}
\caption{The basic simplification used to compute the density matrix of region A in figure \ref{fig:sixlegexample}b. The six leg tensors are from the edge of the network shown in figure \ref{fig:sixlegexample}b. The effect of the state given in equation \ref{eq:newprojector} is to add a normalization factor, represented as the blue loop at right.}
\label{fig:densityAcontraction}
\end{figure}
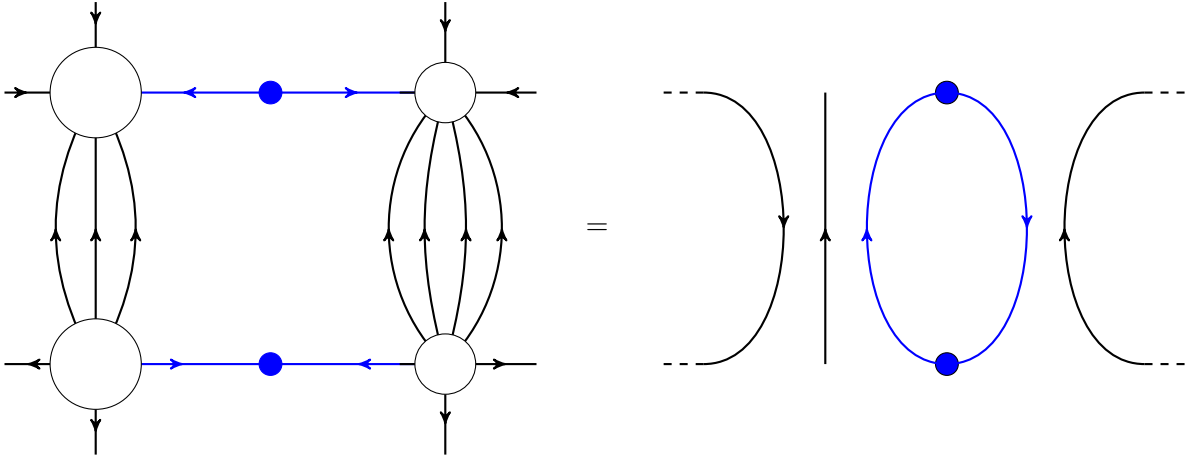

We can go a long ways towards computing this entropy using the graphical notation. To do this we draw an arrow reversed copy of figure \ref{fig:sixlegexample}b, then contract all the legs in $\bar{A}$. To understand what happens when this is done consider the diagram in figure \ref{fig:densityAcontraction}. The simplification shown there gives that all the insertions of $\mathcal{O}$ that are not adjacent to region $A$ turn into pure normalization factors. After continuing the contraction we are left with the density matrix illustrated in figure \ref{fig:densityAsixleg}. As an operator expression, this is
\begin{align} \label{eq:rhoa}
\rho_A = T (\mathcal{O}\mathcal{O}^\dagger\otimes \mathcal{I}\otimes\mathcal{O}\mathcal{O}^\dagger) T^\dagger.
\end{align}
The entropy is given by:
\begin{align} 
S(\rho_A) = S(T^\dagger \rho_A T) = 2\cdot S(\mathcal{O}\mathcal{O}^\dagger) + 1
\end{align}
By choosing $\mathcal{O}$ to be non-unitary we find an entropy less than $3=L_G(\gamma_{min})$, so we have that the RT formula using the graph length fails. 

\begin{figure}
\begin{center}
\begin{tikzpicture}[scale=0.3]
\draw[thick,blue,postaction={on each segment={mid arrow}}] (2,2) to [out=45,in=180] (7,5);
\draw[fill=blue] (7,5) circle [radius=0.75];
\draw[thick,blue,postaction={on each segment={mid arrow}}] (13,5) to [out=180,in=0] (7,5);
\draw[fill=blue] (13,5) circle [radius=0.75];
\draw[thick,blue,postaction={on each segment={mid arrow}}] (13,5) to [out=0,in=135] (18,2);
\draw[thick,blue,postaction={on each segment={mid arrow}}] (2,-2) to [out=-45,in=180] (7,-5);
\draw[fill=blue] (7,-5) circle [radius=0.75];
\draw[thick,blue,postaction={on each segment={mid arrow}}] (13,-5) to [out=180,in=0] (7,-5);
\draw[fill=blue] (13,-5) circle [radius=0.75];
\draw[thick,blue,postaction={on each segment={mid arrow}}] (13,-5) to [out=0,in=225] (18,-2);
\draw[thick,postaction={on each segment={mid arrow}}] (0,0) -- (20,0);
\draw[thick,postaction={on each segment={mid arrow}}] (-3,0) -- (-7,0);
\draw[thick,postaction={on each segment={mid arrow}}] (-2,2) -- (-4,5);
\draw[thick,postaction={on each segment={mid arrow}}] (-2,-2) -- (-4,-5);
\draw[thick,postaction={on each segment={mid arrow}}] (27,0) -- (23,0);
\draw[thick,postaction={on each segment={mid arrow}}] (24,5)--(22,2);
\draw[thick,postaction={on each segment={mid arrow}}] (24,-5)--(22,-2);
\draw[fill=white] (0,0) circle [radius=3];
\draw[fill=white] (20,0) circle [radius=3];
\end{tikzpicture}
\end{center}
\caption{Graphical representation of the density matrix of the region $A$ from figure \ref{fig:sixlegexample}b. }
\label{fig:densityAsixleg}
\end{figure}
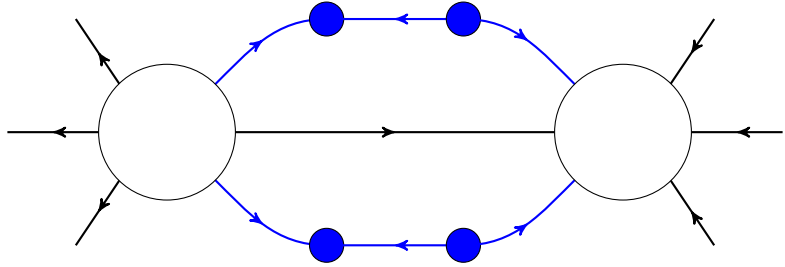

What is the minimal length when computed using eq. \ref{eq:mutualdef}? The cut with the minimal number of legs actually defines an isometry on both sides\footnote{To be precise, the $\bar{A}$ side of this cut has the property $D_\gamma^\dagger D_\gamma = \alpha \mathcal{I}$ for a scalar $\alpha$. This scalar is divided out when the normalization is added to the state.}. This is in fact what we used when showing that the reduced density matrix $\rho_A$ was given by expression \ref{eq:rhoa}. Since both sides of the cut are isometries the state on the cut is just given by the projecting state, which in this case is
\begin{align}
\ket{\Psi} = \ket{\Psi_i}_{B_1 \bar{B}_1} \otimes \ket{\Psi^+}_{B_2 \bar{B}_2} \otimes \ket{\Psi_i}_{B_3 \bar{B}_3}.
\end{align}
There are two legs with operator insertions, which have length
\begin{align}
L = \frac{1}{2}I_{\Psi_i}(B:\bar{B})= S(\mathcal{O}\mathcal{O}^\dagger),
\end{align}
while the leg with no insertion has length $I_{\Psi^+}(B:\bar{B})/2=1$, giving $L = 2\cdot S(\mathcal{O}\mathcal{O}^\dagger)+1 = S(\rho_A)$. It is straightforward to check the minimal lengths and boundary entropies of any other region $A$ in the network shown in figure \ref{fig:sixlegexample}b agree.

\begin{figure}
\begin{center}
\begin{subfigure}[b]{0.45\textwidth}
\begin{center}
\begin{tikzpicture}[scale=0.187]
\bluebrasegment{0}{1}{0}{11}{0.3}
\bellbrasegment{0}{-1}{0}{-11}{0.3}
\bluebrasegment{2}{1}{9}{5.2}{0.3}
\bellbrasegment{2}{-1}{9}{-5.2}{0.3}
\bluebrasegment{-2}{1}{-9}{5.2}{0.3}
\bellbrasegment{-2}{-1}{-9}{-5.2}{0.3}
\bellbrasegment{11}{5}{14}{0}{0.3}
\bellbrasegment{11}{-5}{14}{0}{0.3}
\bellbrasegment{-11}{5}{-14}{0}{0.3}
\bellbrasegment{-11}{-5}{-14}{0}{0.3}
\bellbrasegment{-9.5}{7}{-6.5}{11}{0.3}
\bellbrasegment{9.5}{7}{6.5}{11}{0.3}
\bellbrasegment{-9.5}{-7}{-6.5}{-11}{0.3}
\bellbrasegment{9.5}{-7}{6.5}{-11}{0.3}
\bellbrasegment{-1}{12}{-6.5}{11}{0.3}
\bellbrasegment{1}{12}{6.5}{11}{0.3}
\bellbrasegment{-1}{-12}{-6.5}{-11}{0.3}
\bellbrasegment{1}{-12}{6.5}{-11}{0.3}
\arrowedsegment{14}{6}{11.5}{6}
\polararrowedsegment{13}{7.5}{210}{2}
\arrowedsegment{12}{9.5}{10.5}{7.5}
\arrowedsegment{-14}{6}{-11.5}{6}
\polararrowedsegment{-13}{7.5}{-30}{2}
\arrowedsegment{-12}{9.5}{-10.5}{7.5}
\arrowedsegment{-14}{-6}{-11.5}{-6}
\polararrowedsegment{-13}{-7.5}{30}{2}
\arrowedsegment{-12}{-9.5}{-10.5}{-7.5}
\arrowedsegment{14}{-6}{11.5}{-6}
\polararrowedsegment{13}{-7.5}{150}{2}
\arrowedsegment{12}{-9.5}{10.5}{-7.5}
\arrowedsegment{0}{-16}{0}{-13}
\arrowedsegment{2}{-15.5}{1}{-12.5}
\arrowedsegment{-2}{-15.5}{-1}{-12.5}
\arrowedsegment{0}{16}{0}{13}
\arrowedsegment{2}{15.5}{1}{12.5}
\arrowedsegment{-2}{15.5}{-1}{12.5}
\arrowedsegment{17}{0}{14}{0}
\arrowedsegment{16.5}{1.5}{14}{0}
\arrowedsegment{16.5}{-1.5}{14}{0}
\arrowedsegment{-17}{0}{-14}{0}
\arrowedsegment{-16.5}{1.5}{-14}{0}
\arrowedsegment{-16.5}{-1.5}{-14}{0}
\arrowedsegment{9.5}{11}{6.5}{11}
\arrowedsegment{6.5}{14}{6.5}{11}
\arrowedsegment{8.5}{13}{6.5}{11}
\arrowedsegment{-9.5}{11}{-6.5}{11}
\arrowedsegment{-6.5}{14}{-6.5}{11}
\arrowedsegment{-8.5}{13}{-6.5}{11}
\arrowedsegment{9.5}{-11}{6.5}{-11}
\arrowedsegment{6.5}{-14}{6.5}{-11}
\arrowedsegment{8.5}{-13}{6.5}{-11}
\arrowedsegment{-9.5}{-11}{-6.5}{-11}
\arrowedsegment{-6.5}{-14}{-6.5}{-11}
\arrowedsegment{-8.5}{-13}{-6.5}{-11}
\draw[thick,fill=white] (0,0) circle [radius=3];
\node at (0,0) {a};
\draw[thick,fill=white] (10,6) circle [radius=2];
\node at (10,6) {b};
\draw[thick,fill=white] (10,-6) circle [radius=2];
\node at (10,-6) {c};
\draw[thick,fill=white] (-10,6) circle [radius=2];
\node at (-10,6) {f};
\draw[thick,fill=white] (-10,-6) circle [radius=2];
\node at (-10,-6) {e};
\draw[thick,fill=white] (0,12) circle [radius=2];
\node at (0,12) {g};
\draw[thick,fill=white] (0,-12) circle [radius=2];
\node at (0,-12) {d};
\draw[thick,fill=white] (14,0) circle [radius=1];
\draw[thick,fill=white] (-14,0) circle [radius=1];
\draw[thick,fill=white] (6.5,11) circle [radius=1];
\draw[thick,fill=white] (-6.5,11) circle [radius=1];
\draw[thick,fill=white] (6.5,-11) circle [radius=1];
\draw[thick,fill=white] (-6.5,-11) circle [radius=1];
\draw[thick, dashed] (0,6) to [out=30,in=-100] (3.75,11.5);
\draw[thick, dashed] (3.75,11.5) to [out=90,in=-90] (3.75,16.5);
\draw[thick, dashed] (0,6) to [out=200,in=30] (-5.5,3.1);
\draw[thick, dashed] (-5.5,3.1) to [out=210,in=-30] (-12.5,2.5);
\draw[thick, dashed] (-12.5,2.5) to [out=150,in=-30] (-16.5,5);
\node[above] at (3.75,16.5) {$\gamma_1$};
\draw[thick, dashed] (0,-6) to [out=30,in=-120] (5.5,-3.1);
\draw[thick, dashed] (5.5,-3.1) to [out=60,in=-90] (5.5,3.1);
\draw[thick, dashed] (5.5,3.1) to [out=90,in=-150] (8,9);
\draw[thick, dashed] (8,9) to [out=30,in=-140] (12,12);
\node[right] at (12,12) {$\gamma_2$};
\draw[thick, dashed] (0,-6) to [out=210,in=80] (-3.75,-11.5);
\draw[thick, dashed] (-3.75,-11.5) to [out=-100,in=-100] (-4.75,-15.5);
\end{tikzpicture}
\end{center}
\caption{}
\end{subfigure}
\hfill
\begin{subfigure}[b]{.45\textwidth}
\begin{center}
\begin{tikzpicture}[scale=0.187]
\bellbrasegment{0}{1}{0}{11}{0.3}
\bluebrasegment{0}{-1}{0}{-11}{0.3}
\bluebrasegment{2}{1}{9}{5.2}{0.3}
\bluebrasegment{2}{-1}{9}{-5.2}{0.3}
\bellbrasegment{-2}{1}{-9}{5.2}{0.3}
\bellbrasegment{-2}{-1}{-9}{-5.2}{0.3}
\draw[ultra thick,blue] (0,-6) to [out=0,in=-120] (5.5,-3.1);
\draw[ultra thick,blue] (5.5,-3.1) to [out=60,in=-60] (5.5,3.1);

\draw[thick, dashed] (0,-6) to [out=30,in=-120] (5.5,-2.1);
\draw[thick, dashed] (5.5,-2.1) to [out=60,in=-90] (5.5,3.1);
\draw[thick, dashed] (5.5,3.1) to [out=90,in=-150] (8,9);
\draw[thick, dashed] (8,9) to [out=30,in=-140] (12,12);
\node[right] at (12,12) {$\gamma_2$};
\draw[thick, dashed] (0,-6) to [out=210,in=80] (-3.75,-11.5);
\draw[thick, dashed] (-3.75,-11.5) to [out=-100,in=-100] (-4.75,-15.5);





\bellbrasegment{11}{5}{14}{0}{0.3}
\bellbrasegment{11}{-5}{14}{0}{0.3}
\bellbrasegment{-11}{5}{-14}{0}{0.3}
\bellbrasegment{-11}{-5}{-14}{0}{0.3}
\bellbrasegment{-9.5}{7}{-6.5}{11}{0.3}
\bellbrasegment{9.5}{7}{6.5}{11}{0.3}
\bellbrasegment{-9.5}{-7}{-6.5}{-11}{0.3}
\bellbrasegment{9.5}{-7}{6.5}{-11}{0.3}
\bellbrasegment{-1}{12}{-6.5}{11}{0.3}
\bellbrasegment{1}{12}{6.5}{11}{0.3}
\bellbrasegment{-1}{-12}{-6.5}{-11}{0.3}
\bellbrasegment{1}{-12}{6.5}{-11}{0.3}
\arrowedsegment{14}{6}{11.5}{6}
\polararrowedsegment{13}{7.5}{210}{2}
\arrowedsegment{12}{9.5}{10.5}{7.5}
\arrowedsegment{-14}{6}{-11.5}{6}
\polararrowedsegment{-13}{7.5}{-30}{2}
\arrowedsegment{-12}{9.5}{-10.5}{7.5}
\arrowedsegment{-14}{-6}{-11.5}{-6}
\polararrowedsegment{-13}{-7.5}{30}{2}
\arrowedsegment{-12}{-9.5}{-10.5}{-7.5}
\arrowedsegment{14}{-6}{11.5}{-6}
\polararrowedsegment{13}{-7.5}{150}{2}
\arrowedsegment{12}{-9.5}{10.5}{-7.5}
\arrowedsegment{0}{-16}{0}{-13}
\arrowedsegment{2}{-15.5}{1}{-12.5}
\arrowedsegment{-2}{-15.5}{-1}{-12.5}
\arrowedsegment{0}{16}{0}{13}
\arrowedsegment{2}{15.5}{1}{12.5}
\arrowedsegment{-2}{15.5}{-1}{12.5}
\arrowedsegment{17}{0}{14}{0}
\arrowedsegment{16.5}{1.5}{14}{0}
\arrowedsegment{16.5}{-1.5}{14}{0}
\arrowedsegment{-17}{0}{-14}{0}
\arrowedsegment{-16.5}{1.5}{-14}{0}
\arrowedsegment{-16.5}{-1.5}{-14}{0}
\arrowedsegment{9.5}{11}{6.5}{11}
\arrowedsegment{6.5}{14}{6.5}{11}
\arrowedsegment{8.5}{13}{6.5}{11}
\arrowedsegment{-9.5}{11}{-6.5}{11}
\arrowedsegment{-6.5}{14}{-6.5}{11}
\arrowedsegment{-8.5}{13}{-6.5}{11}
\arrowedsegment{9.5}{-11}{6.5}{-11}
\arrowedsegment{6.5}{-14}{6.5}{-11}
\arrowedsegment{8.5}{-13}{6.5}{-11}
\arrowedsegment{-9.5}{-11}{-6.5}{-11}
\arrowedsegment{-6.5}{-14}{-6.5}{-11}
\arrowedsegment{-8.5}{-13}{-6.5}{-11}
\draw[thick,fill=white] (0,0) circle [radius=3];
\node at (0,0) {a};
\draw[thick,fill=white] (10,6) circle [radius=2];
\node at (10,6) {b};
\draw[thick,fill=white] (10,-6) circle [radius=2];
\node at (10,-6) {c};
\draw[thick,fill=white] (-10,6) circle [radius=2];
\node at (-10,6) {f};
\draw[thick,fill=white] (-10,-6) circle [radius=2];
\node at (-10,-6) {e};
\draw[thick,fill=white] (0,12) circle [radius=2];
\node at (0,12) {g};
\draw[thick,fill=white] (0,-12) circle [radius=2];
\node at (0,-12) {d};
\draw[thick,fill=white] (14,0) circle [radius=1];
\draw[thick,fill=white] (-14,0) circle [radius=1];
\draw[thick,fill=white] (6.5,11) circle [radius=1];
\draw[thick,fill=white] (-6.5,11) circle [radius=1];
\draw[thick,fill=white] (6.5,-11) circle [radius=1];
\draw[thick,fill=white] (-6.5,-11) circle [radius=1];
\end{tikzpicture}
\end{center}
\caption{}
\end{subfigure}
\begin{subfigure}[b]{.45\textwidth}
  \begin{center}

  \begin{tikzpicture}[scale=0.187]
    \reflectbox{
\bellbrasegment{0}{1}{0}{11}{0.3}
\bluebrasegment{0}{-1}{0}{-11}{0.3}
\bluebrasegment{2}{1}{9}{5.2}{0.3}
\bluebrasegment{2}{-1}{9}{-5.2}{0.3}
\bellbrasegment{-2}{1}{-9}{5.2}{0.3}
\bellbrasegment{-2}{-1}{-9}{-5.2}{0.3}
\draw[ultra thick,blue] (0,-6) to [out=0,in=-120] (5.5,-3.1);
\draw[ultra thick,blue] (5.5,-3.1) to [out=60,in=-60] (5.5,3.1);
\bellbrasegment{11}{5}{14}{0}{0.3}
\bellbrasegment{11}{-5}{14}{0}{0.3}
\bellbrasegment{-11}{5}{-14}{0}{0.3}
\bellbrasegment{-11}{-5}{-14}{0}{0.3}
\bellbrasegment{-9.5}{7}{-6.5}{11}{0.3}
\bellbrasegment{9.5}{7}{6.5}{11}{0.3}
\bellbrasegment{-9.5}{-7}{-6.5}{-11}{0.3}
\bellbrasegment{9.5}{-7}{6.5}{-11}{0.3}
\bellbrasegment{-1}{12}{-6.5}{11}{0.3}
\bellbrasegment{1}{12}{6.5}{11}{0.3}
\bellbrasegment{-1}{-12}{-6.5}{-11}{0.3}
\bellbrasegment{1}{-12}{6.5}{-11}{0.3}
\arrowedsegment{14}{6}{11.5}{6}
\polararrowedsegment{13}{7.5}{210}{2}
\arrowedsegment{12}{9.5}{10.5}{7.5}
\arrowedsegment{-14}{6}{-11.5}{6}
\polararrowedsegment{-13}{7.5}{-30}{2}
\arrowedsegment{-12}{9.5}{-10.5}{7.5}
\arrowedsegment{-14}{-6}{-11.5}{-6}
\polararrowedsegment{-13}{-7.5}{30}{2}
\arrowedsegment{-12}{-9.5}{-10.5}{-7.5}
\arrowedsegment{14}{-6}{11.5}{-6}
\polararrowedsegment{13}{-7.5}{150}{2}
\arrowedsegment{12}{-9.5}{10.5}{-7.5}
\arrowedsegment{0}{-16}{0}{-13}
\arrowedsegment{2}{-15.5}{1}{-12.5}
\arrowedsegment{-2}{-15.5}{-1}{-12.5}
\arrowedsegment{0}{16}{0}{13}
\arrowedsegment{2}{15.5}{1}{12.5}
\arrowedsegment{-2}{15.5}{-1}{12.5}
\arrowedsegment{17}{0}{14}{0}
\arrowedsegment{16.5}{1.5}{14}{0}
\arrowedsegment{16.5}{-1.5}{14}{0}
\arrowedsegment{-17}{0}{-14}{0}
\arrowedsegment{-16.5}{1.5}{-14}{0}
\arrowedsegment{-16.5}{-1.5}{-14}{0}
\arrowedsegment{9.5}{11}{6.5}{11}
\arrowedsegment{6.5}{14}{6.5}{11}
\arrowedsegment{8.5}{13}{6.5}{11}
\arrowedsegment{-9.5}{11}{-6.5}{11}
\arrowedsegment{-6.5}{14}{-6.5}{11}
\arrowedsegment{-8.5}{13}{-6.5}{11}
\arrowedsegment{9.5}{-11}{6.5}{-11}
\arrowedsegment{6.5}{-14}{6.5}{-11}
\arrowedsegment{8.5}{-13}{6.5}{-11}
\arrowedsegment{-9.5}{-11}{-6.5}{-11}
\arrowedsegment{-6.5}{-14}{-6.5}{-11}
\arrowedsegment{-8.5}{-13}{-6.5}{-11}
\draw[thick,fill=white] (0,0) circle [radius=3];
\draw[thick,fill=white] (10,6) circle [radius=2];
\draw[thick,fill=white] (10,-6) circle [radius=2];
\draw[thick,fill=white] (-10,6) circle [radius=2];
\draw[thick,fill=white] (-10,-6) circle [radius=2];
\draw[thick,fill=white] (0,12) circle [radius=2];
\draw[thick,fill=white] (0,-12) circle [radius=2];
\draw[thick,fill=white] (14,0) circle [radius=1];
\draw[thick,fill=white] (-14,0) circle [radius=1];
\draw[thick,fill=white] (6.5,11) circle [radius=1];
\draw[thick,fill=white] (-6.5,11) circle [radius=1];
\draw[thick,fill=white] (6.5,-11) circle [radius=1];
\draw[thick,fill=white] (-6.5,-11) circle [radius=1];
}
\draw[thick,fill=white] (0,0) circle [radius=3];
\node at (0,0) {a};
\draw[thick,fill=white] (10,6) circle [radius=2];
\node at (10,6) {b};
\draw[thick,fill=white] (10,-6) circle [radius=2];
\node at (10,-6) {c};
\draw[thick,fill=white] (-10,6) circle [radius=2];
\node at (-10,6) {f};
\draw[thick,fill=white] (-10,-6) circle [radius=2];
\node at (-10,-6) {e};
\draw[thick,fill=white] (0,12) circle [radius=2];
\node at (0,12) {g};
\draw[thick,fill=white] (0,-12) circle [radius=2];
\node at (0,-12) {d};

\end{tikzpicture}
\end{center}
\caption{}
\end{subfigure}
\hfill
\begin{subfigure}[b]{.45\textwidth}
\begin{center}
\begin{tikzpicture}[scale=0.187]
\bellbrasegment{0}{1}{0}{11}{0.3}
\bluebrasegment{0}{-1}{0}{-11}{0.3}
\bellbrasegment{2}{1}{9}{5.2}{0.3}
\bluebrasegment{2}{-1}{9}{-5.2}{0.3}
\bellbrasegment{-2}{1}{-9}{5.2}{0.3}
\bluebrasegment{-2}{-1}{-9}{-5.2}{0.3}
\draw[ultra thick,blue] (0,-6) to [out=0,in=-120] (5.5,-3.1);
\draw[ultra thick,blue] (0,-6) to [out=180,in=-60] (-5.5,-3.1);
\bellbrasegment{11}{5}{14}{0}{0.3}
\bellbrasegment{11}{-5}{14}{0}{0.3}
\bellbrasegment{-11}{5}{-14}{0}{0.3}
\bellbrasegment{-11}{-5}{-14}{0}{0.3}
\bellbrasegment{-9.5}{7}{-6.5}{11}{0.3}
\bellbrasegment{9.5}{7}{6.5}{11}{0.3}
\bellbrasegment{-9.5}{-7}{-6.5}{-11}{0.3}
\bellbrasegment{9.5}{-7}{6.5}{-11}{0.3}
\bellbrasegment{-1}{12}{-6.5}{11}{0.3}
\bellbrasegment{1}{12}{6.5}{11}{0.3}
\bellbrasegment{-1}{-12}{-6.5}{-11}{0.3}
\bellbrasegment{1}{-12}{6.5}{-11}{0.3}
\arrowedsegment{14}{6}{11.5}{6}
\polararrowedsegment{13}{7.5}{210}{2}
\arrowedsegment{12}{9.5}{10.5}{7.5}
\arrowedsegment{-14}{6}{-11.5}{6}
\polararrowedsegment{-13}{7.5}{-30}{2}
\arrowedsegment{-12}{9.5}{-10.5}{7.5}
\arrowedsegment{-14}{-6}{-11.5}{-6}
\polararrowedsegment{-13}{-7.5}{30}{2}
\arrowedsegment{-12}{-9.5}{-10.5}{-7.5}
\arrowedsegment{14}{-6}{11.5}{-6}
\polararrowedsegment{13}{-7.5}{150}{2}
\arrowedsegment{12}{-9.5}{10.5}{-7.5}
\arrowedsegment{0}{-16}{0}{-13}
\arrowedsegment{2}{-15.5}{1}{-12.5}
\arrowedsegment{-2}{-15.5}{-1}{-12.5}
\arrowedsegment{0}{16}{0}{13}
\arrowedsegment{2}{15.5}{1}{12.5}
\arrowedsegment{-2}{15.5}{-1}{12.5}
\arrowedsegment{17}{0}{14}{0}
\arrowedsegment{16.5}{1.5}{14}{0}
\arrowedsegment{16.5}{-1.5}{14}{0}
\arrowedsegment{-17}{0}{-14}{0}
\arrowedsegment{-16.5}{1.5}{-14}{0}
\arrowedsegment{-16.5}{-1.5}{-14}{0}
\arrowedsegment{9.5}{11}{6.5}{11}
\arrowedsegment{6.5}{14}{6.5}{11}
\arrowedsegment{8.5}{13}{6.5}{11}
\arrowedsegment{-9.5}{11}{-6.5}{11}
\arrowedsegment{-6.5}{14}{-6.5}{11}
\arrowedsegment{-8.5}{13}{-6.5}{11}
\arrowedsegment{9.5}{-11}{6.5}{-11}
\arrowedsegment{6.5}{-14}{6.5}{-11}
\arrowedsegment{8.5}{-13}{6.5}{-11}
\arrowedsegment{-9.5}{-11}{-6.5}{-11}
\arrowedsegment{-6.5}{-14}{-6.5}{-11}
\arrowedsegment{-8.5}{-13}{-6.5}{-11}
\draw[thick,fill=white] (0,0) circle [radius=3];
\node at (0,0) {a};
\draw[thick,fill=white] (10,6) circle [radius=2];
\node at (10,6) {b};
\draw[thick,fill=white] (10,-6) circle [radius=2];
\node at (10,-6) {c};
\draw[thick,fill=white] (-10,6) circle [radius=2];
\node at (-10,6) {f};
\draw[thick,fill=white] (-10,-6) circle [radius=2];
\node at (-10,-6) {e};
\draw[thick,fill=white] (0,12) circle [radius=2];
\node at (0,12) {g};
\draw[thick,fill=white] (0,-12) circle [radius=2];
\node at (0,-12) {d};
\draw[thick,fill=white] (14,0) circle [radius=1];
\draw[thick,fill=white] (-14,0) circle [radius=1];
\draw[thick,fill=white] (6.5,11) circle [radius=1];
\draw[thick,fill=white] (-6.5,11) circle [radius=1];
\draw[thick,fill=white] (6.5,-11) circle [radius=1];
\draw[thick,fill=white] (-6.5,-11) circle [radius=1];
\end{tikzpicture}
\end{center}
\caption{}
\end{subfigure}
\caption{The four networks included in the set $F$ considered in text. The projecting states shown in blue in (a) are defined by an operator $\mathcal{O}$ acting on the maximally entangled state, these can be pushed to any subset of three legs, as seen in (b-d). In general the pushed through operator will not have a tensor product structure, so the corresponding projecting state will be entangled across three legs. This is indicated here by the thick blue line in (b-d). All four networks shown here have the same boundary state. The entropies of subsets of boundary legs is calculated by choosing the network which contains an isometric cut enclosing those boundary legs, and applying the isometric cut formula. No one network contains an isometric cut for every boundary region, but the set of four networks together do.}
\label{fig:dynamicexample}
\end{center}
\end{figure}
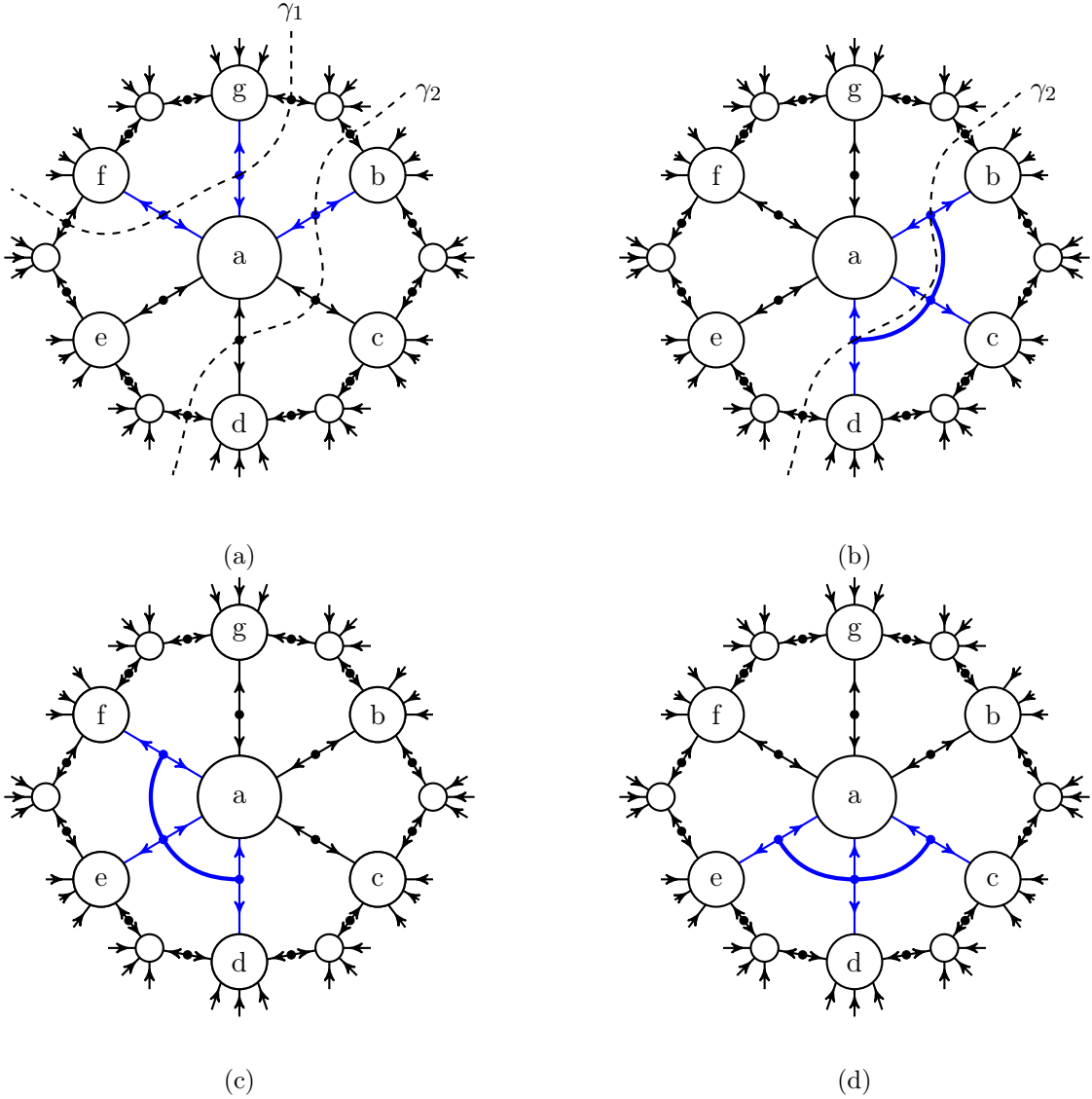

\section{Dynamic tensor network states}\label{sec:dynamicexample}

Consider an AdS spacetime and a spacelike slice of its boundary. The maximin formula shows that the entanglement entropy of a given boundary region can be found by determining the area of an extremal surface extending into the AdS spacetime. For a dynamic spacetime these extremal surfaces may lie in many different slices of the interior. In the tensor network picture we have identified isometric cuts as the network analogue of extremal surfaces. Further, we have suggested a set of networks contracting to a single boundary state is the analogue of the set of spacelike slices of the bulk spacetime. A tensor network state which is analogous to an evolving spacetime then should have isometric cuts for different boundary regions living in different networks drawn from this set. 

We will call this set of networks $F$, and specify the networks it contains by giving a defining network $N_0$ along with a set of allowed transformation rules. Continuing our analogy, we view these transformations as corresponding to deformations of the interior spacelike slices. Importantly, these transformations must preserve the boundary state. An example of such a transformation was given as equation \ref{eq:slickpsifree}.

To construct examples of boundary states with a geometry corresponding to a dynamic spacetime we begin with the example network of figure \ref{fig:sixlegexample}a and replace three of the maximally entangled pairs which are projected into the central vertex with the state
\begin{align}
\ket{\Psi_i}=(\mathcal{O} \otimes \mathcal{I}) \ket{\Psi^+}.
\end{align}
This is our defining network, shown in figure \ref{fig:dynamicexample}a. The allowed transformations we take to be the operator pushing operation discussed in section \ref{sec:basics}. This results in the four networks shown in figure \ref{fig:dynamicexample}b-d being included in the set $F$.

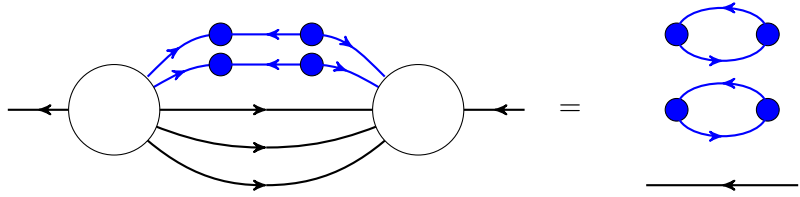
\begin{figure}
\begin{center}
\begin{tikzpicture}[scale=0.2]
\draw[thick,postaction={on each segment={mid arrow}}] (-3,0) -- (-7,0);
\draw[thick,postaction={on each segment={mid arrow}}] (27,0) -- (23,0);

\draw[thick,blue,postaction={on each segment={mid arrow}}] (2.59,1.5) to [out=30,in=180] (7,3);
\draw[fill=blue] (7,3) circle [radius=0.75];
\draw[thick,blue,postaction={on each segment={mid arrow}}] (13,3)--(7,3);
\draw[fill=blue] (13,3) circle [radius=0.75];
\draw[thick,blue,postaction={on each segment={mid arrow}}] (13,3)to [out=0,in=150] (17.4,1.5);

\draw[thick,->] (0,0) to [out=-22.5,in=180] (10,-2.5);
\draw[thick,->] (10,-2.5) to [out=0,in=-157.5] (20,0);
\draw[thick,->] (0,0) to [out=0,in=180] (10,0);
\draw[thick,->] (10,0) to [out=0,in=180] (20,0);

\draw[thick,blue,postaction={on each segment={mid arrow}}] (2.2,2.2) to [out=45,in=180] (7,5);
\draw[fill=blue] (7,5) circle [radius=0.75];
\draw[thick,blue,postaction={on each segment={mid arrow}}] (13,5)--(7,5);
\draw[fill=blue] (13,5) circle [radius=0.75];
\draw[thick,blue,postaction={on each segment={mid arrow}}] (13,5)to [out=0,in=135] (17.8,2.2);
\draw[thick,->] (0,0) to [out=-45,in=180] (10,-5);
\draw[thick,->] (10,-5) to [out=0,in=-135] (20,0);
\draw[fill=white] (0,0) circle [radius=3];
\draw[fill=white] (20,0) circle [radius=3];
\draw node at (30,0) {$=$};
\draw[thick,postaction={on each segment={mid arrow}}] (45,-5) -- (35,-5);

\draw[fill=blue] (37,5) circle [radius=0.75];
\draw[fill=blue] (43,5) circle [radius=0.75];
\draw[thick,blue,postaction={on each segment={mid arrow}}] (43,5) to [out=90,in=90] (37,5);
\draw[thick,blue,postaction={on each segment={mid arrow}}] (37,5) to [out=-90,in=-90] (43,5);

\draw[fill=blue] (37,0) circle [radius=0.75];
\draw[fill=blue] (43,0) circle [radius=0.75];
\draw[thick,blue,postaction={on each segment={mid arrow}}] (43,0) to [out=90,in=90] (37,0);
\draw[thick,blue,postaction={on each segment={mid arrow}}] (37,0) to [out=-90,in=-90] (43,0);

\end{tikzpicture}
\end{center}
\caption{The identity used to show a cut containing one interior leg which is blue in \ref{fig:dynamicexample}a is isometric up to a normalization factor. This identity is easily derived from that in figure \ref{fig:subperfect}.}
\label{fig:subperfect2}
\end{figure}

\begin{figure}
\begin{center}
\begin{tikzpicture}[scale=0.2]
\draw[thick,postaction={on each segment={mid arrow}}] (-2.2,-2.2) -- (-5,-5);
\draw[thick,postaction={on each segment={mid arrow}}] (-2.2,2.2) -- (-5,5);
\draw[thick,postaction={on each segment={mid arrow}}] (25,-5)--(22.2,-2.2);
\draw[thick,postaction={on each segment={mid arrow}}] (25,5) -- (22.2,2.2);
\draw[thick,->] (0,0) to [out=22.5,in=180] (10,2.5);
\draw[thick,->] (10,2.5) to [out=0,in=157.5] (20,0);
\draw[thick,->] (0,0) to [out=-22.5,in=180] (10,-2.5);
\draw[thick,->] (10,-2.5) to [out=0,in=-157.5] (20,0);
\draw[thick,blue,postaction={on each segment={mid arrow}}] (2.2,2.2) to [out=45,in=180] (7,5);
\draw[fill=blue] (7,5) circle [radius=0.75];
\draw[thick,blue,postaction={on each segment={mid arrow}}] (13,5)--(7,5);
\draw[fill=blue] (13,5) circle [radius=0.75];
\draw[thick,blue,postaction={on each segment={mid arrow}}] (13,5)to [out=0,in=135] (17.8,2.2);
\draw[thick,->] (0,0) to [out=-45,in=180] (10,-5);
\draw[thick,->] (10,-5) to [out=0,in=-135] (20,0);
\draw[fill=white] (0,0) circle [radius=3];
\draw[fill=white] (20,0) circle [radius=3];
\draw node at (30,0) {$=$};
\draw[thick,postaction={on each segment={mid arrow}}] (45,-5) -- (35,-5);
\draw[thick,postaction={on each segment={mid arrow}}] (45,0) -- (35,0);
\draw[fill=blue] (37,5) circle [radius=0.75];
\draw[fill=blue] (43,5) circle [radius=0.75];
\draw[thick,blue,postaction={on each segment={mid arrow}}] (43,5) to [out=90,in=90] (37,5);
\draw[thick,blue,postaction={on each segment={mid arrow}}] (37,5) to [out=-90,in=-90] (43,5);
\end{tikzpicture}
\end{center}
\caption{The identity used to show the cut $\gamma_1$ in figure \ref{fig:dynamicexample}a is isometric up to a normalization factor.}
\label{fig:subperfect}
\end{figure}
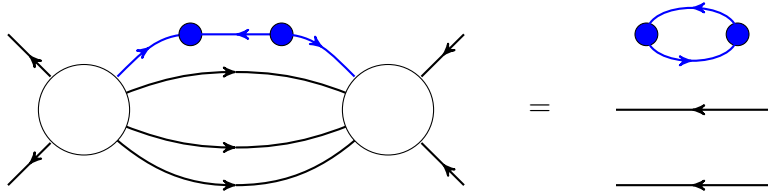

It is not difficult to see that no one of these networks contains isometric cuts for every boundary region, but the set of four together do. Consider for example the defining network, figure \ref{fig:dynamicexample}a, with $\mathcal{O}$ sitting on three of the interior legs (marked as blue legs). Consider first the subregion consisting of exterior legs attached to vertex f. The cut enclosing f and crossing three legs is isometric, this follows from the identity shown in figure \ref{fig:subperfect2}. Similarly, the cut $\gamma_2$ which encloses f and g can be shown to be isometric by use of the identity in figure \ref{fig:subperfect}. Further, the boundary legs adjacent to vertices f,g,b or c,d,e are also enclosed by isometric cuts contained in the network of figure \ref{fig:dynamicexample}a, since a minimal cut which crosses three interior blue legs is isometric. 

The need for additional networks to be included in the set $F$ arises when we consider subregions adjacent to two or fewer black interior legs. Take for example the subregion containing vertices b, c and d. The cut $\gamma_2$ which crosses one blue and two black interior legs is not isometric, nor is the other possible cut which crosses one black and two blue interior legs. To find an isometric cut enclosing b-c-d we consider network \ref{fig:dynamicexample}b. Since all the operator insertions now live on the cut $\gamma_2$ and $\gamma_2$ crosses a minimal number of legs, it is isometric. Similarly, an isometric cut for enclosing d-e-f can be found in network \ref{fig:dynamicexample} and an isometric cut for e-f-g in network \ref{fig:dynamicexample}. 

A complication arises in considering the region containing only c, d or e, or regions f-e, e-d, d-c, c-b. Consider region d-c, the remaining possibilities are handled similarly. In this case an isometric cut can be found in network \ref{fig:dynamicexample}b. To see this, we must return to the notion of a cut through the network. Recall that a cut $\gamma$ corresponds to a specification of projecting state, on which operators $C$ and $D$ act to prepare the boundary state. Implicitly, choosing a cut involves specifying which legs of the projecting state are acted on by the operator $C$ and which by the operator $D$, corresponding to our breakdown of the projecting state Hilbert space into $\mathcal{H}_B$ and $\mathcal{H}_{\bar{B}}$. To specify this in the graphical notation we can use a double line to specify a cut through the network. One cut $\gamma$ crosses the legs which are associated with the $B$ Hilbert space, with a second cut $\bar{\gamma}$ denoting the legs in the $\bar{B}$ Hilbert space. We have adopted this notation in figure \ref{fig:doubleline} to specify an isometric cut for the region c-d. It is straightforward to show that the operators defined by this cut are isometric, from which we can conclude that the mutual information across the $B$ and $\bar{B}$ subsystems of the projecting state is equal to the entropy of the boundary region c-d, as needed. 

\begin{figure}
\begin{center}
\begin{tikzpicture}[scale=0.187]
\bellbrasegment{0}{1}{0}{11}{0.3}
\bluebrasegment{0}{-1}{0}{-11}{0.3}
\bluebrasegment{2}{1}{9}{5.2}{0.3}
\bluebrasegment{2}{-1}{9}{-5.2}{0.3}
\bellbrasegment{-2}{1}{-9}{5.2}{0.3}
\bellbrasegment{-2}{-1}{-9}{-5.2}{0.3}
\draw[ultra thick,blue] (0,-6) to [out=0,in=-120] (5.5,-3.1);
\draw[ultra thick,blue] (5.5,-3.1) to [out=60,in=-60] (5.5,3.1);


\draw[thick, dashed] (10,2) to [out=10,in=210] (17,7);
\draw[thick, dashed] (5,5) to [out=30,in=180] (10,2);
\draw[thick, dashed] (5,5) to [out=210,in=60] (4,-2.6);
\draw[thick, dashed] (4,-2.6) to [out=240,in=30] (0,-4.5);
\draw[thick, dashed] (0,-4.5) to [out=210,in=80] (-4.75,-11.5);
\draw[thick, dashed] (-4.75,-11.5) to [out=-100,in=-100] (-5.5,-15.5);

\draw[thick, dashed] (-2.75,-11.5) to [out=-100,in=80] (-3.5,-16);
\draw[thick, dashed] (0,-8) to [out=210,in=80] (-2.75,-11.5);

\draw[thick, dashed] (0,-8) to [out=20,in=-120] (7,-3.6);
\draw[thick, dashed] (7,-3.6) to [out=70,in=-160] (10,0);
\draw[thick, dashed] (10,0) to [out=15,in=-160] (17,5);

\node[below] at (-3.5,-16) {$\gamma$};
\node[below] at (-5.5,-15.5) {$\bar{\gamma}$};

\bellbrasegment{11}{5}{14}{0}{0.3}
\bellbrasegment{11}{-5}{14}{0}{0.3}
\bellbrasegment{-11}{5}{-14}{0}{0.3}
\bellbrasegment{-11}{-5}{-14}{0}{0.3}
\bellbrasegment{-9.5}{7}{-6.5}{11}{0.3}
\bellbrasegment{9.5}{7}{6.5}{11}{0.3}
\bellbrasegment{-9.5}{-7}{-6.5}{-11}{0.3}
\bellbrasegment{9.5}{-7}{6.5}{-11}{0.3}
\bellbrasegment{-1}{12}{-6.5}{11}{0.3}
\bellbrasegment{1}{12}{6.5}{11}{0.3}
\bellbrasegment{-1}{-12}{-6.5}{-11}{0.3}
\bellbrasegment{1}{-12}{6.5}{-11}{0.3}
\arrowedsegment{14}{6}{11.5}{6}
\polararrowedsegment{13}{7.5}{210}{2}
\arrowedsegment{12}{9.5}{10.5}{7.5}
\arrowedsegment{-14}{6}{-11.5}{6}
\polararrowedsegment{-13}{7.5}{-30}{2}
\arrowedsegment{-12}{9.5}{-10.5}{7.5}
\arrowedsegment{-14}{-6}{-11.5}{-6}
\polararrowedsegment{-13}{-7.5}{30}{2}
\arrowedsegment{-12}{-9.5}{-10.5}{-7.5}
\arrowedsegment{14}{-6}{11.5}{-6}
\polararrowedsegment{13}{-7.5}{150}{2}
\arrowedsegment{12}{-9.5}{10.5}{-7.5}
\arrowedsegment{0}{-16}{0}{-13}
\arrowedsegment{2}{-15.5}{1}{-12.5}
\arrowedsegment{-2}{-15.5}{-1}{-12.5}
\arrowedsegment{0}{16}{0}{13}
\arrowedsegment{2}{15.5}{1}{12.5}
\arrowedsegment{-2}{15.5}{-1}{12.5}
\arrowedsegment{17}{0}{14}{0}
\arrowedsegment{16.5}{1.5}{14}{0}
\arrowedsegment{16.5}{-1.5}{14}{0}
\arrowedsegment{-17}{0}{-14}{0}
\arrowedsegment{-16.5}{1.5}{-14}{0}
\arrowedsegment{-16.5}{-1.5}{-14}{0}
\arrowedsegment{9.5}{11}{6.5}{11}
\arrowedsegment{6.5}{14}{6.5}{11}
\arrowedsegment{8.5}{13}{6.5}{11}
\arrowedsegment{-9.5}{11}{-6.5}{11}
\arrowedsegment{-6.5}{14}{-6.5}{11}
\arrowedsegment{-8.5}{13}{-6.5}{11}
\arrowedsegment{9.5}{-11}{6.5}{-11}
\arrowedsegment{6.5}{-14}{6.5}{-11}
\arrowedsegment{8.5}{-13}{6.5}{-11}
\arrowedsegment{-9.5}{-11}{-6.5}{-11}
\arrowedsegment{-6.5}{-14}{-6.5}{-11}
\arrowedsegment{-8.5}{-13}{-6.5}{-11}
\draw[thick,fill=white] (0,0) circle [radius=3];
\node at (0,0) {a};
\draw[thick,fill=white] (10,6) circle [radius=2];
\node at (10,6) {b};
\draw[thick,fill=white] (10,-6) circle [radius=2];
\node at (10,-6) {c};
\draw[thick,fill=white] (-10,6) circle [radius=2];
\node at (-10,6) {f};
\draw[thick,fill=white] (-10,-6) circle [radius=2];
\node at (-10,-6) {e};
\draw[thick,fill=white] (0,12) circle [radius=2];
\node at (0,12) {g};
\draw[thick,fill=white] (0,-12) circle [radius=2];
\node at (0,-12) {d};
\draw[thick,fill=white] (14,0) circle [radius=1];
\draw[thick,fill=white] (-14,0) circle [radius=1];
\draw[thick,fill=white] (6.5,11) circle [radius=1];
\draw[thick,fill=white] (-6.5,11) circle [radius=1];
\draw[thick,fill=white] (6.5,-11) circle [radius=1];
\draw[thick,fill=white] (-6.5,-11) circle [radius=1];
\end{tikzpicture}
\end{center}
\caption{Illustration of how to choose an isometric cut for the subregion consisting of legs adjacent to the d and c vertices. The upper dashed line crosses legs included in the $\bar{B}$ Hilbert space, while the lower dashed line crosses legs included in the $B$ Hilbert space. It is straightforward to check that both the operators above and below the dashed lines are isometries; it follows that $S(A) = \frac{1}{2}I(\bar{B}:B)$.}
\label{fig:doubleline}
\end{figure}
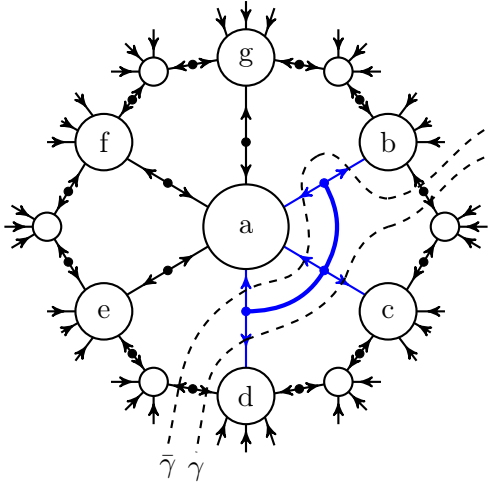

One advantage to the isometric cut formula is that it is not necessary to limit the networks which are included in the set $F$ searched over. Indeed, it is a logical consequence of our definitions that any cut which is isometric will have as its length the entropy of the enclosed boundary region. Thus although we specified only the four networks given in figure \ref{fig:dynamicexample} we could add arbitrary networks to the set $F$. Those without isometric cuts would not disturb the isometric cut formula at all, and any additional networks having isometric cuts would give unchanged values for the boundary entropies. 

It is not difficult to construct further examples of sets of networks satisfying the isometric cut formula based on the construction used here. Indeed, we may continue the pattern of contraction given by figure \ref{fig:sixlegexample}a and construct networks with an arbitrary number of layers. We can then proceed to replace a subset of the projecting maximally entangled pairs by non-maximally entangled states; allowing the same freedom of pushing operators through adjacent tensors then gives a set of networks associated with a fixed boundary state. We have not yet systematically studied which networks defined in this way contain isometric cuts for all possible boundary regions; doing so remains a direction for future work.

\section{Discussion and open questions} \label{sec:discussion}

Tensor network models realize key features of the AdS/CFT correspondence, including the connection between entanglement and geometry given by the Ryu-Takayanagi formula. Here we have extended tensor network models to include a larger class of states with entropies calculated using cuts in a set of networks. To achieve this, we have restated the Ryu-Takayanagi formula for networks as the isometric cut formula,
\begin{align}
S(A) = L(\gamma_{iso}^A)
\end{align}
where $\gamma_{iso}^A$ is an isometric cut, which means the map defined by the network on either side of $\gamma$ is an isometry from the cut legs to the boundary. This isometric cut formula for networks is analogous to the maximin formula in AdS/CFT, with isometric cuts in networks playing the role of extremal surfaces in AdS/CFT. This identification of isometric cuts as the network analogue of extremal surfaces is supported by the reduction of isometric cuts to minimal surfaces in the case of a HaPPY network. Additionally, this identification leads to an assignment of length in the network which has the expected geometric properties of additivity and gauge invariance. 

It would be interesting to modify the definition of length given here, which as stated relies on the existence of an isometric cut crossing the leg of interest, to a purely local definition. Such a definition would allow the geometric procedure for determining boundary entropies to be stated as a maximin formula rather than the reformulation as the isometric cut formula used here. Additionally one would know all of the lengths in every interior slice, something not possible with the isometric cut definition. One natural possibility is to define the length of a leg by the mutual information of the corresponding projecting state. However recovering the maximin formula using such a definition is problematic. This is because we can write the projecting state as $\ket{\Psi_i} = \mathcal{O} \otimes \mathcal{I} \ket{\Psi^+}$ and move $\mathcal{O}$ off of the projecting state, reconsidering it as part of a nearby tensor. In such a network the length of the leg $\gamma_i$ considered will be $\log \dim \gamma_i$. If networks with the operator $\mathcal{O}$ moved off of all the projecting states are included in the set optimized over the maximin formula will always return cuts with $L(\gamma) = \log \dim \gamma$, which restricts the models to describing static spacetimes. Thus one would need appropriate limitations on the set of networks optimized over if such a definition of length were used. 

One motivation for this work is to try and learn something about how a tensor network model can be evolved in spacetime. As a first step, we have clearly identified the HaPPY networks as representing static states. This indicates HaPPY states are eigenstates of an appropriate Hamiltonian, although it is not clear how to construct this Hamiltonian or which set of HaPPY states correspond to a single Hamiltonian. Also in this context, the variations of the interior of the network which are needed to find isometric cuts for all boundary regions are most easily interpreted as evolving a portion of the network (forward or backward) in time. By studying which such operations are required so that all isometric cuts are included we may learn something about how to apply time evolution operators to a network.

The perspective here begins by specifying a network. This defines a boundary state and associated geometry. In appendix \ref{sec:networksfromgeometry} we argue that it is also possible to begin with a geometry and construct a network and boundary state. It is interesting to consider proceeding in another direction - beginning with a quantum state, when can we construct a network which has an appropriate geometry and contracts to the given boundary state? This question is the network analogue of asking which CFT states have gravity duals. Our contribution indicates that this question should be modified somewhat, to ask which quantum states have an associated set of networks which satisfy the isometric cut formula and all contract to the given boundary state. 

\section{Acknowledgements}

This work was carried out under the supervision of Mark Van Raamsdonk, who was involved in discussions throughout this project and reviewed drafts of this manuscript. Charles Rabideau made useful contributions in the early stages of the project. We are also indebted to Michael Walter, Grant Salton, Zhao Yang, and David Stephen for helpful discussions. Dominik Neuenfeld and Jaehoon Lee provided feedback on early versions of the manuscript.

AM was partially supported by the It from Qubit Collaboration, which is sponsored by the Simons Foundation. AM was also supported by a CGSM award given by the National Research Council of Canada. This research benefited from the It from Qubit summer school held at the Perimeter Institute for Theoretical Physics. Research at Perimeter Institute is supported by the Government of Canada through Industry Canada and by the Province of Ontario through the Ministry of Economic Development \& Innovation.

\bibliographystyle{unsrt}
\bibliography{biblio}

\begin{thebibliography}{10}

\bibitem{ryu2006holographic}
Shinsei Ryu and Tadashi Takayanagi.
\newblock Holographic derivation of entanglement entropy from the anti--de
  sitter space/conformal field theory correspondence.
\newblock {\em Physical review letters}, 96(18):181602, 2006.

\bibitem{vidal2007entanglement}
Guifre Vidal.
\newblock Entanglement renormalization.
\newblock {\em Physical review letters}, 99(22):220405, 2007.

\bibitem{vidal2008class}
Guifr{\'e} Vidal.
\newblock Class of quantum many-body states that can be efficiently simulated.
\newblock {\em Physical review letters}, 101(11):110501, 2008.

\bibitem{eisert2010colloquium}
Jens Eisert, Marcus Cramer, and Martin~B Plenio.
\newblock Colloquium: Area laws for the entanglement entropy.
\newblock {\em Reviews of Modern Physics}, 82(1):277, 2010.

\bibitem{pastawski2015holographic}
Fernando Pastawski, Beni Yoshida, Daniel Harlow, and John Preskill.
\newblock Holographic quantum error-correcting codes: toy models for the
  bulk/boundary correspondence.
\newblock {\em Journal of High Energy Physics}, 6, 2015.

\bibitem{hayden2016holographic}
Patrick Hayden, Xiao-Liang Qi, Michael Walter, Nathaniel Thomas, Zhao Yang, and
  Sepehr Nezami.
\newblock Holographic duality from random tensor networks.
\newblock {\em JHEP}, 1611(arXiv: 1601.01694):009, 2016.

\bibitem{yang2016bidirectional}
Zhao Yang, Patrick Hayden, and Xiao-Liang Qi.
\newblock Bidirectional holographic codes and sub-ads locality.
\newblock {\em Journal of High Energy Physics}, 1, 2016.

\bibitem{czech2012gravity}
Bart{\l}omiej Czech, Joanna~L Karczmarek, Fernando Nogueira, and Mark
  Van~Raamsdonk.
\newblock The gravity dual of a density matrix.
\newblock {\em Classical and Quantum Gravity}, 29(15):155009, 2012.

\bibitem{bousso2012light}
Raphael Bousso, Stefan Leichenauer, and Vladimir Rosenhaus.
\newblock Light-sheets and ads/cft.
\newblock {\em Physical Review D}, 86(4):046009, 2012.

\bibitem{hubeny2012causal}
Veronika~E Hubeny and Mukund Rangamani.
\newblock Causal holographic information.
\newblock {\em Journal of High Energy Physics}, 2012(6):1--35, 2012.

\bibitem{bousso2013null}
Raphael Bousso, Ben Freivogel, Stefan Leichenauer, Vladimir Rosenhaus, and
  Claire Zukowski.
\newblock Null geodesics, local cft operators, and ads/cft for subregions.
\newblock {\em Physical Review D}, 88(6):064057, 2013.

\bibitem{almheiri2015bulk}
Ahmed Almheiri, Xi~Dong, and Daniel Harlow.
\newblock Bulk locality and quantum error correction in ads/cft.
\newblock {\em Journal of High Energy Physics}, 4(2015):1--34, 2015.

\bibitem{hubeny2007covariant}
Veronika~E Hubeny, Mukund Rangamani, and Tadashi Takayanagi.
\newblock A covariant holographic entanglement entropy proposal.
\newblock {\em Journal of High Energy Physics}, 2007(07):062, 2007.

\bibitem{wall2014maximin}
Aron~C Wall.
\newblock Maximin surfaces, and the strong subadditivity of the covariant
  holographic entanglement entropy.
\newblock {\em Classical and Quantum Gravity}, 31(22):225007, 2014.

\bibitem{jiang2013channel}
Min Jiang, Shunlong Luo, and Shuangshuang Fu.
\newblock Channel-state duality.
\newblock {\em Physical Review A}, 87(2):022310, 2013.

\bibitem{qi2013exact}
Xiao-Liang Qi.
\newblock Exact holographic mapping and emergent space-time geometry.
\newblock {\em arXiv preprint arXiv:1309.6282}, 2013.

\bibitem{balasubramanian2014bulk}
Vijay Balasubramanian, Borun~D Chowdhury, Bart{\l}omiej Czech, Jan de~Boer, and
  Michal~P Heller.
\newblock Bulk curves from boundary data in holography.
\newblock {\em Physical Review D}, 89(8):086004, 2014.

\bibitem{headrick2014holographic}
Matthew Headrick, Robert~C Myers, and Jason Wien.
\newblock Holographic holes and differential entropy.
\newblock {\em arXiv preprint arXiv:1408.4770}, 2014.

\bibitem{bao2015holographic}
Ning Bao, Sepehr Nezami, Hirosi Ooguri, Bogdan Stoica, James Sully, and Michael
  Walter.
\newblock The holographic entropy cone.
\newblock {\em arXiv preprint arXiv:1505.07839}, 2015.

\end{thebibliography}

\appendix

\section{Networks from geometry} \label{sec:networksfromgeometry}

Typically in the HaPPY or random tensor network construction one begins with a network. The network defines a boundary state and a geometry. In the random tensor construction it is possible to begin with a geometry and construct a network which has minimal surfaces matching those of the given geometry. This was already claimed in ref. \cite{hayden2016holographic}, however, we argue that the construction given there is problematic and give an alternative construction. We also wish to acknowledge that our construction borrows a technique from ref. \cite{bao2015holographic}. 

We consider a disk $M = \{ (x,y) : 0 \leq x^2+y^2 \leq 1\}$ which is endowed with a distance function $d(u,v)$. Our goal is to fill in the disk with a planar tensor network which satisfies the Ryu-Takayanagi formula and whose minimal surfaces have lengths approximating the function $d(u,v)$. We introduce a parameter $\delta$ which represents a unit of length in the continuous geometry. In particular we say the network approximates the geometry of the disk to a resolution of $\epsilon$ if
\begin{align} \label{eq:approx}
\left|d(u,v) - \frac{\delta}{\log D} \cdot L(u,v)\right| \leq \epsilon,
\end{align}
where $L(u,v)$ is the graph length in the network, $D$ is the dimension of the legs in the network. The $\delta / \log{D}$ should be understood as a conversion factor from graph length (unitless) to physical length.

\begin{figure}
\begin{center}
\begin{tikzpicture}[scale=0.2]
\draw (0,0) circle [radius=15]; 
\draw[blue,thick] (10.6,10.6) -- (10.6,-10.6) -- (-10.6,-10.6) --(-10.6,10.6) --(10.6,10.6);
\draw[blue,thick] (10.6,10.6) -- (-10.6,-10.6);
\draw[blue,thick] (10.6,-10.6) -- (-10.6,10.6);
\node[right] at (0,0) {$O$};
\node[right] at (10.6,10.6) {$A$};
\node[right] at (10.6,-10.6) {$B$};
\node[left] at (-10.6,-10.6) {$C$};
\node[left] at (-10.6,10.6) {$D$};
\end{tikzpicture}
\end{center}
\caption{Illustration of our procedure for constructing a network whose minimal lengths approximate those of a given geometry. The example shown constructs a network with four boundary legs which approximates a disk shaped region of $\mathbb{R}^2$. In the first step, four boundary points are chosen and all of the minimal cuts anchored on those points are drawn. The minimal cuts form a planar graph, in this example the graph has vertices $A,B,C,D$ and $O$.}
\label{fig:stepone}
\end{figure}
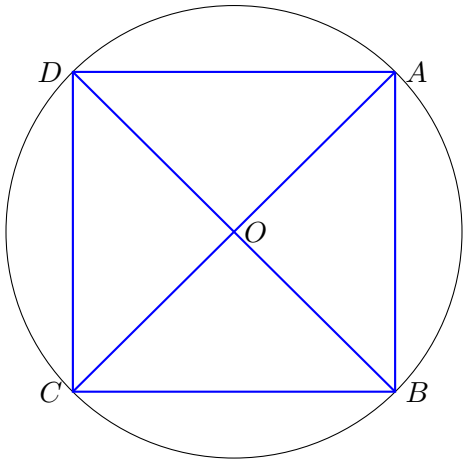

Our strategy is to construct a graph whose minimal cuts satisfy \ref{eq:approx} and then populate that graph with random tensors of large bond dimension. With random tensors placed on the vertices, the results of ref. \cite{hayden2016holographic} then guarantee the Ryu-Takayanagi formula is satisfied. Let us consider as an example a disk which is a section of flat space, so $d(u,v) = \sqrt{(u_1-v_1)^2+(u_2-v_2)^2}$. A reasonable first approach is to tile the disk with a regular polygon. This can be done with triangles, squares, or hexagons. However, none of these tilings correctly reproduce lengths in the disk in the sense of \ref{eq:approx}. For example in a tiling with squares, the graph length function is 
\begin{align}
L(u,v) = |u_1 - v_1| + |u_2 - v_2|,
\end{align}
which doesn't approximate the Euclidean distance. Additionally, such a distance function gives highly degenerate minimal surfaces - for example a staircase shaped path gives the same distance between two boundary points as a path which turns only once. Regular tilings using triangles or hexagons produce similar graph distance functions and also have degenerate minimal surfaces.

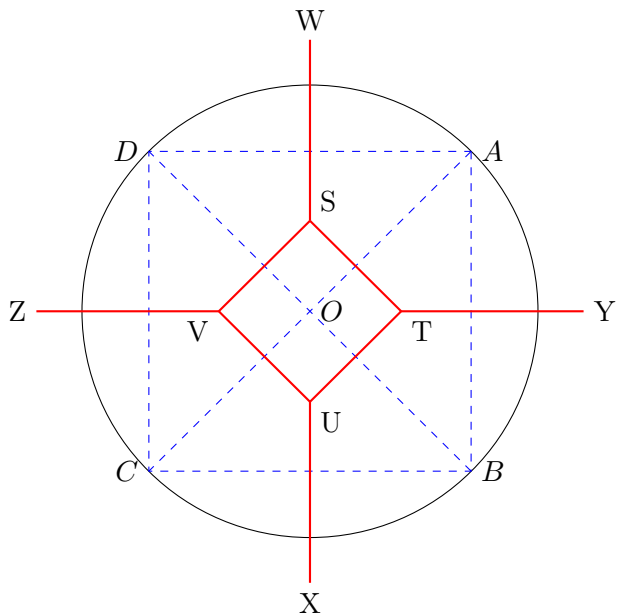
\begin{figure}
\begin{center}
\begin{tikzpicture}[scale=0.2]
\draw (0,0) circle [radius=15]; 
\draw[dashed, blue, thin] (10.6,10.6) -- (10.6,-10.6) -- (-10.6,-10.6) --(-10.6,10.6) --(10.6,10.6);
\draw[dashed,blue,thin] (10.6,10.6) -- (-10.6,-10.6);
\draw[dashed,blue,thin] (10.6,-10.6) -- (-10.6,10.6);
\node[right] at (0,0) {$O$};
\node[right] at (10.6,10.6) {$A$};
\node[right] at (10.6,-10.6) {$B$};
\node[left] at (-10.6,-10.6) {$C$};
\node[left] at (-10.6,10.6) {$D$};

\draw[red,thick] (6,0) -- (18,0);
\draw[red,thick] (-6,0) -- (-18,0);
\draw[red,thick] (0,6) -- (0,18);
\draw[red,thick] (0,-6) -- (0,-18);
\draw[red,thick] (0,6) -- (6,0) -- (0,-6) -- (-6,0) -- (0,6);

\node[above right] at (0,6) {S};
\node[below right] at (6,0) {T};
\node[below right] at (0,-6) {U};
\node[below left] at (-6,0) {V};

\node[above] at (0,18) {W};
\node[below] at (0,-18) {X};
\node[right] at (18,0) {Y};
\node[left] at (-18,0) {Z};

\end{tikzpicture}
\end{center}
\caption{Illustration of the second step in our procedure for constructing a network which approximates a given geometry. In this step, the dual of the graph formed in step one is drawn. The edges of the dual graph are assigned a weight based on the $\mathbb{R}^2$ length of the edges in the direct graph they cut. For example, a weight of $\text{Floor}( \bar{AB} / \delta)$ is assigned to the edge $TY$, where $\delta x$ is a parameter with units of length controlling how closely the graph approximates lengths in the disk.}
\label{fig:steptwo}
\end{figure}

Our construction begins by specifying a set of points on the edge of the disk. The closeness of our approximation is set in part by the number of points on the boundary chosen, which we will denote by $N$. The construction proceeds by drawing every minimal surface between pairs of these points; this is illustrated in figure \ref{fig:stepone}. The resulting surfaces define a graph which we take to be the dual graph of the tensor network being constructed. Importantly, the edges in the direct graph are assigned a weighting $w_i$ set by
\begin{align}
w_i = \text{Floor}(d(u_j,u_k) / \delta).
\end{align}
Up to the rounding implemented by the floor function, the weight of the edges in the direct graph is given by the length of the edges in the dual graph which they cut, measured in units of $\delta$. Forming the direct graph from the dual graph and assigning the weightings is illustrated in figure \ref{fig:steptwo}.  

Finally, random tensors are placed on the vertices of the direct graph and the number of legs along an edge is chosen to be equal to the weighting $w_i$ associated with that leg. We can then show that the resulting tensor network has lengths which satisfy \ref{eq:approx}. To prove this, note that a cut in the tensor network is also a path in the dual graph. We will consider a minimal cut passing from $u_0\rightarrow u_N$ where the $u_i$ are vertices in the dual graph. Consider one segment of that path which passes from $u_i$ to $u_j$. The length of this segment is given by 
\begin{align}
L(u_i,u_j) = (\text{number of legs crossed})\log D.
\end{align}
The number of legs crossed is just $w_i$, which is the length of that segment of the path given in units of $\delta$, 
\begin{align}
L(u_i,u_j) = \text{Floor}\left(\frac{d(u_i,u_j)}{\delta}\right) \cdot \log D
\end{align}
Inserting this into \ref{eq:approx} gives that
\begin{align} 
\left|d(u_i,u_j) - \frac{\delta}{\log D} \cdot L(u_i,u_j)\right| \leq \delta.
\end{align}
The number of boundary points chosen, $N$, sets the maximal number of segments in a minimal cut through the disk, which we call $f(N)$\footnote{A chord $AB$ divides the points $C,D...$ on the circles edge into two sets of size $n_1$ and $n_2$ where $n_1+n_2\leq N-2$. Since every pairing of such points gives a chord which crosses $AB$ once we can bound the number of cuts through $AB$ by $((N-2)/2)^2$}. Then the triangle inequality gives that for a minimal path through the network
\begin{align}
d(u_0,u_n) \leq \delta f(N).
\end{align}
A network with resolution $\epsilon$ then can be constructed by choosing $\delta = \epsilon / f(N)$.

\end{document}